\providecommand{\U}[1]{\protect\rule{.1in}{.1in}}
\newtheorem{theo}{Theorem}
\newtheorem{conj}{Conjecture}
\newtheorem{defi}{Definition}
\newtheorem{rema}{Remark}
\newtheorem{lemm}{Lemma}
\newtheorem{lemm2}{Lemma}[section]
\begin{document}

\title{The Quantum Relative Entropy as a Rate Function\\and Information Criteria}
\author{Kazuya Okamura\thanks{e-mail: kazuqi@kurims.kyoto-u.ac.jp}\\
Research Institute of Mathematical Sciences,\\
Kyoto University, Kyoto, 606-8502, Japan }
\maketitle

\begin{abstract}
We prove that the quantum relative entropy is a rate function in large
deviation principle. Next, we define information criteria for quantum states
and estimate the accuracy of the use of them. Most of the results in this
paper are essentially based on Hiai-Ohya-Tsukada theorem.

\end{abstract}


\section{Introduction}

In recent years there is a high level of interest in quantum information
theory among researchers. In particular, they are concentrating their studies on
control and estimation of quantum states in many systems in order to put it to
practical use.

Quantum estimation theory is one of the areas of quantum information theory,
which was established in 1970's \cite{He76,Ho82}. Statistical decision theory
and hypothesis testing for quantum systems were mainly studied in the
beginning stage of this theory. Recently various methods such as information
geometry, asymptotic theory, and Bayesian analysis are studied theoretically
and applied to many experiments. Especially, it is not too much to say that
quantum estimation theory made the monopoly of optimization of measurement.

Our main interest in the paper is asymptotic theory based on large deviation
principle. This is the reason why we use the relative entropy, which plays the
role of a rate function in large deviation principle and is a typical
quasi-distance between probability distributions. The quantum relative entropy
is defined analogusly and becomes a quasi-distance between quantum states.
In early investigations \cite{P87,PRV89,P90}, the possibility that it is a rate function
in large deviation principle was studied in the context of variational principle.
No one however could answer whether this conjecture is correct in a general setting.

A partial answer is given by a quantum version of Stein's lemma in
\cite{HP91,ON00,OH04}, which is conceptually different from the original Stein's lemma \cite{DZ02}
and from that in this paper. By contrast, a quantum version of Sanov's theorem which is widely
accepted has not appeared \cite{HN07}. There is only one proposal given in
\cite{B06}.

We will show that some of classical statistical methods are applicable to
quantum systems, owing to the universality of statistics and information
theory. As a part of this attempt, we will answer the previous question
whether the quantum relative entropy is a rate function. In the concrete, a
quantum version of Stein's lemma and of Sanov's theorem are proved in Section 4.

Next in Section 5, quantum model selection will be discussed. We try to apply
information criteria for quantum states. Information criteria for probability
distributions are usually described in terms of the relative entropy, but we
derive those for quantum states from the quantum relative entropy. In preceding investigations
 \cite{U03,YE11}, information criteria are applied to
probability distributions calculated from POVM (positive operator-valued
measures). Accordingly, the accuracy of the estimation by the use of
information criteria for quantum states has not been discussed.

Mathematical and physical backgrounds in this paper are algebraic quantum
theory, measuring processes, and Hiai-Ohya-Tsukada theorem. We begin with
algebraic quantum theory before stating main results. Its mathematical basis
is the theory of operator algebras to clarify the structures of C$^{\ast}$-algebras
and of von Neumann algebras, the spaces of linear functionals on
them and so on \cite{BR1,T79,T02}. C$^{\ast}$-algebras are mainly used in the
present paper, which describe observables of the system, and equivalence
classes of their $\ast$-representations, called sectors, play the essential
role of classifiers of macroscopic structures in quantum systems. We describe
measuring processes within the framework of algebraic quantum theory
\cite{HO09,Oj10}. We use the (C$^{\ast}$-)tensor product of the algebra of the
system and of an apparatus as the algebra of observables of the composite
system. Measuring interactions are then defined as automorphisms on this
tensor algebra. After these preparations, we can apply Hiai-Ohya-Tsukada
theorem \cite[Theorem 3.2]{HOT83}. This theorem is crucial for relating the quantum
relative entropy with the relative entropy of probability distributions of
measured data. We would like to mention that the methods developed in the
paper arise from the context of Large Deviation Strategy \cite{OO11}.

Finally, in Section 6, we will discuss the quantum $\alpha$-divergence
\cite{AN00} as a quasi-entropy formulated by D. Petz \cite{P85}. See
\cite{AN00} for its fundamental properties. We will show that the quantum
$\alpha$-divergence is applicable in the (C$^{\ast}$-)algebraic setting.
First, for this purpose, the $\alpha$-analogue of Hiai-Ohya-Tsukada theorem
is proved. Secondly, a quantum version of Chernoff bound is proved, which is
different from that in \cite{Au07,NS09,Ha07,N06}. This is a large deviation
type estimate, too. Lastly, we define a Bayesian $\alpha$-predictive state and
prove that this minimizes a risk function constructed of the quantum $\alpha
$-divergence. This result generalizes those in \cite{Ai75,TK05,T06}.

\section{Algebraic Quantum Theory and Measuring Processes}

\subsection{Algebraic Quantum Theory and Preliminaries}

Algebraic quantum theory begins with a C$^{\ast}$-algebra $\mathfrak{A}$ of
observables of the system. A C$^{\ast}$-algebra $\mathfrak{A}$ is a $\ast
$-algebra over $\mathbb{C}$, i.e., an algebra over $\mathbb{C}$ with the
involution $^{\ast}:\mathfrak{A}\rightarrow\mathfrak{A}$ defined by
\begin{equation*}
(aA+bB)^{\ast}=\bar{a}A^{\ast}+\bar{b}B^{\ast},\hspace{3mm} (AB)^{\ast}=B^{\ast}A^{\ast},\hspace{3mm}
A^{\ast\ast}:=(A^{\ast})^{\ast}=A,
\end{equation*}
for $A,B \in\mathfrak{A}, a,b\in\mathbb{C}$, where $\bar{a}$ is the complex conjugate of $a$,
and also is a Banach space with the norm $\Vert\cdot\Vert: \mathfrak{A}\rightarrow\mathbb{R}_{\geq0}%
=\{a\in\mathbb{R}\;|\; a\geq0\}$ which satisfies, for $A,B\in\mathfrak{A},
a,b\in\mathbb{C}$,
\begin{equation*}
\Vert AB \Vert\leq\Vert A \Vert\;\Vert B \Vert,\hspace{3mm} \Vert A^{\ast}A \Vert=\Vert A
\Vert^{2},\hspace{3mm} \Vert A^{\ast}\Vert=\Vert A \Vert.
\end{equation*}
In particular, self-adjoint elements $A=A^{\ast}$ of $\mathfrak{A}$ are called
observables. C$^{\ast}$-algebras $\mathfrak{A}$ are assumed to be unital,
i.e., $\mathfrak{A}$ have the unit $1$ in the paper. Next, a state $\omega$ on
$\mathfrak{A}$ is defined as a normalized positive linear functional:
\begin{equation*}
\omega(1)=1,\hspace{3mm} \omega(A^{\ast}A)\geq 0, \hspace{3mm}\omega(aA+bB)=a\;\omega(A)+b\;\omega(B),
\end{equation*}
for $A,B\in\mathfrak{A}, a,b\in\mathbb{C}$. We denote by $E_{\mathfrak{A}}$
the set of states on $\mathfrak{A}$. As is seen by the definition, the concept
of a state is nothing but a complex-valued output mapping. The
following theorem then holds:

\begin{theo}[GNS construction theorem \cite{BR1,T79}]
Let $\mathfrak{A}$ be a C$^{\ast}$-algebra and $\omega$ a state on $\mathfrak{A}$.
Then, there exist a Hilbert space $\mathcal{H}_{\omega}$ with the inner product $\langle\cdot,
\cdot\rangle$, a vector $\Omega_{\omega}\in\mathcal{H}_{\omega}$, and a $\ast
$-homomorphism $\pi_{\omega}: \mathfrak{A}\rightarrow\text{\boldmath $B$
}(\mathcal{H}_{\omega})$, called a $\ast$-representation (a representation,
for short) of $\mathfrak{A}$, such that $\omega(A)=\langle\Omega_{\omega},
\pi_{\omega}(A)\Omega_{\omega}\rangle$. The triplet $(\pi_{\omega},
\mathcal{H}_{\omega}, \Omega_{\omega})$ is called a GNS representation of
$\mathfrak{A}$ with respect to $\omega$.

Furthermore, two GNS representations $(\pi_{\omega,1}, \mathcal{H}_{\omega,1},
\Omega_{\omega,1})$, $(\pi_{\omega,2}, \mathcal{H}_{\omega,2}, \Omega
_{\omega,2})$ of $\mathfrak{A}$ with respect to $\omega$ are unitarily
equivalent, i.e., there exists a unitary operator $U: \mathcal{H}_{\omega
,1}\rightarrow\mathcal{H}_{\omega,2}$ such that $\pi_{\omega,2}(A)=U\pi
_{\omega,1}(A)U^{\ast}$ for $A\in\mathfrak{A}$ and $\Omega_{\omega,2}=U\Omega_{\omega,1}$.
\end{theo}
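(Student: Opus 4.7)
The plan is to construct the GNS triple by the standard quotient-and-completion recipe built directly from the state. First, I would put a sesquilinear form on $\mathfrak{A}$ by $\langle A,B\rangle_{\omega}:=\omega(A^{\ast}B)$; positivity of $\omega$ makes this positive semi-definite, and the usual Cauchy--Schwarz derivation yields $|\omega(A^{\ast}B)|^{2}\leq\omega(A^{\ast}A)\,\omega(B^{\ast}B)$. Set $\mathcal{N}_{\omega}:=\{A\in\mathfrak{A}\,|\,\omega(A^{\ast}A)=0\}$; Cauchy--Schwarz, applied to the state $\omega(\,\cdot\,)$ and to the form above, shows that $\mathcal{N}_{\omega}$ is a closed left ideal. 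The quotient $\mathfrak{A}/\mathcal{N}_{\omega}$ therefore carries a genuine inner product, and I would define $\mathcal{H}_{\omega}$ to be its Hilbert space completion, writing $[A]$ for the class of $A\in\mathfrak{A}$.

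Second, I would define the representation by left multiplication, $\pi_{\omega}(A)[B]:=[AB]$. The left-ideal property of $\mathcal{N}_{\omega}$ makes this well-defined on the dense subspace of classes. The decisive estimate is
\[
\Vert\pi_{\omega}(A)[B]\Vert^{2}=\omega(B^{\ast}A^{\ast}AB)\leq\Vert A\Vert^{2}\,\omega(B^{\ast}B)=\Vert A\Vert^{2}\,\Vert[B]\Vert^{2},
\]
which follows because $\Vert A\Vert^{2}\cdot 1-A^{\ast}A$ is a positive element of the C$^{\ast}$-algebra (its spectrum lies in $[0,\Vert A\Vert^{2}]$) and because $C\mapsto\omega(B^{\ast}CB)$ is a positive linear functional. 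Hence $\pi_{\omega}(A)$ extends uniquely to a bounded operator on $\mathcal{H}_{\omega}$. Linearity, multiplicativity, and the $\ast$-preserving property are verified by direct computation on the dense subspace of classes, using $\omega(C^{\ast}A^{\ast}B)=\omega((AC)^{\ast}B)$. Taking $\Omega_{\omega}:=[1]$ gives $\langle\Omega_{\omega},\pi_{\omega}(A)\Omega_{\omega}\rangle=\omega(1^{\ast}\!A\cdot 1)=\omega(A)$, and by construction $\pi_{\omega}(\mathfrak{A})\Omega_{\omega}=\mathfrak{A}/\mathcal{N}_{\omega}$ is dense in $\mathcal{H}_{\omega}$.

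For the uniqueness clause, given two GNS triples $(\pi_{\omega,i},\mathcal{H}_{\omega,i},\Omega_{\omega,i})$, $i=1,2$, I would define $U$ on the dense subspace $\pi_{\omega,1}(\mathfrak{A})\Omega_{\omega,1}$ by $U\pi_{\omega,1}(A)\Omega_{\omega,1}:=\pi_{\omega,2}(A)\Omega_{\omega,2}$. The identity
\[
\langle\pi_{\omega,i}(A)\Omega_{\omega,i},\pi_{\omega,i}(B)\Omega_{\omega,i}\rangle=\omega(A^{\ast}B)\qquad(i=1,2)
\]
is the common value on both sides, so $U$ is well-defined and isometric; density of each $\pi_{\omega,i}(\mathfrak{A})\Omega_{\omega,i}$ yields a unitary extension. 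The intertwining relation $\pi_{\omega,2}(A)=U\pi_{\omega,1}(A)U^{\ast}$ and the equality $\Omega_{\omega,2}=U\Omega_{\omega,1}$ are immediate from the definition of $U$ applied to $\pi_{\omega,1}(AB)\Omega_{\omega,1}=\pi_{\omega,1}(A)\pi_{\omega,1}(B)\Omega_{\omega,1}$ and to $\pi_{\omega,1}(1)\Omega_{\omega,1}=\Omega_{\omega,1}$.

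The one place where the argument genuinely exploits the C$^{\ast}$-structure, and hence the main obstacle to a purely algebraic proof, is the boundedness step: the inequality $A^{\ast}A\leq\Vert A\Vert^{2}\cdot 1$ needs the C$^{\ast}$-identity $\Vert A^{\ast}A\Vert=\Vert A\Vert^{2}$ together with the spectral/functional calculus that positive elements are exactly those of the form $C^{\ast}C$. Apart from that, the construction is linear algebra on $\mathfrak{A}/\mathcal{N}_{\omega}$ followed by a routine Hilbert-space completion and a bounded-extension argument.
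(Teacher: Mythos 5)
Your construction is the standard quotient-and-completion GNS argument and it is correct; the paper itself states this theorem only as background and defers the proof to \cite{BR1,T79}, where exactly the argument you give (the form $\omega(A^{\ast}B)$, the left ideal $\mathcal{N}_{\omega}$, left multiplication bounded via $A^{\ast}A\leq\Vert A\Vert^{2}1$, and $\Omega_{\omega}=[1]$) appears. The one caveat concerns the uniqueness clause: the unitary you define extends from $\pi_{\omega,1}(\mathfrak{A})\Omega_{\omega,1}$ to all of $\mathcal{H}_{\omega,1}$ only because $\Omega_{\omega,i}$ is cyclic for $\pi_{\omega,i}$, which holds by construction for your triple but must be understood as part of the definition of a GNS representation (as it is in the cited references) for the uniqueness statement to be true as written.
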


We denote by $\mathfrak{A}_{+}^{\ast}$ the set of positive linear fuctionals
on $\mathfrak{A}$. The GNS construction theorem holds for each $\phi
\in\mathfrak{A}_{+}^{\ast}$. Let $\mathcal{H}$ be a Hilbert space. For each
subset $\mathcal{M}$ of $B(\mathcal{H})$, let $\mathcal{M}^{\prime}$
denote the set of all elements in $B(\mathcal{H})$ commuting with every
element in $\mathcal{M}$. A subalgebra $\mathcal{M}$ of $B(\mathcal{H})$ is called a
$\ast$-subalgebra, if $\mathcal{M}$ is invariant under the involution.
Then, a von Neumann algebra $\mathcal{M}$
on $\mathcal{H}$ is a $\ast$-subalgebra of$B(\mathcal{H})$ such that
$\mathcal{M}=\mathcal{M}^{\prime\prime}:=(\mathcal{M}^{\prime})^{\prime}$, and
a factor is a von Neumann algebra $\mathcal{M}$ with trivial center
$\mathfrak{Z}(\mathcal{M}):= \mathcal{M}\cap\mathcal{M}^{\prime}=\mathbb{C}1$.
The center $\mathfrak{Z}(\mathcal{M})$ of the von Neumann algebra
$\mathcal{M}$ is a unique maximal abelian $\ast$-subalgebra of $\mathcal{M}$
commuting with every element of $\mathcal{M}$. It is well known that, for a
subset $\mathcal{J}$ of $B(\mathcal{H})$ which is invariant under the
involution, $\mathcal{J}^{\prime\prime}$ is the smallest von Neumann algebra
containing $\mathcal{J}$. It is called the von Neumann algebra generated by
$\mathcal{J}$. A typical example can be given by the von Neumann algebra generated by
a GNS representation $(\pi_{\omega}, \mathcal{H}_{\omega}, \Omega_{\omega})$
of $\mathfrak{A}$ with respect to $\omega$.
Since the subset $\pi_{\omega}(\mathfrak{A})=\{\pi_{\omega}(A) |
A\in\mathfrak{A}\}$ of $B(\mathcal{H}_{\omega})$ is invariant under the
involution, $\pi_{\omega}(\mathfrak{A})^{\prime\prime}$ is the smallest
von Neumann algbra on $\mathcal{H}_{\omega}$ generated by $\pi_{\omega
}(\mathfrak{A})$.

Let $\pi$ be a representation of $\mathfrak{A}$. A state $\omega$ on
$\mathfrak{A}$ is said to be $\pi$-normal if there exists a normal
state\footnote{A positive linear functional $\rho$ on a von Neumann algebra
$\mathcal{M}$ on $\mathcal{H}$ is said to be normal if there exists a positive
trace class operator $\sigma\in\text{\boldmath$T$}(\mathcal{H})_{+}$ such that
$\rho(A)=\mathrm{Tr}[\sigma A]$ for each $A\in\mathcal{M}$.} $\rho$ on
$\pi(\mathfrak{A})^{\prime\prime}$ such that $\omega(A)=\rho(\pi(A))$ for
every $A\in\mathfrak{A}$. Two representations $\pi_{1},\pi_{2}$ are
quasi-equivalent, denoted by $\pi_{1}\approx\pi_{2}$, if each $\pi_{1}$-normal
state is $\pi_{2}$-normal and vise versa. As a complement, two representations
$\pi_{1},\pi_{2}$ are disjoint, denoted by $\pi_{1}\hspace{0.3em}%
\raisebox{-.41ex}{$\circ$}\hspace{-0.46em}%
\raisebox{0.7ex}{\rotatebox[origin=c]{-90}{--}}\hspace{0.5em}\pi_{2}$, if no
$\pi_{1}$-normal state is $\pi_{2}$-normal and vise versa. A state $\omega$ on
a C$^{\ast}$-algebra $\mathfrak{A}$ is called a factor state if the center
$\mathfrak{Z}_{\omega}(\mathfrak{A}):=\pi_{\omega}(\mathfrak{A})^{\prime}%
\cap\pi_{\omega}(\mathfrak{A})^{\prime\prime}$ of the corresponding von
Neumann algebra $\pi_{\omega}(\mathfrak{A})^{\prime\prime}$ is trivial. We
denote by $F_{\mathfrak{A}}$ the set of factor states on $\mathfrak{A}$.

\begin{defi}
[\cite{Oj03}]A sector of C$^{\ast}$-algebra $\mathfrak{A}$ is defined by a
quasi-equivalence class of factor states of $\mathfrak{A}$.
\end{defi}

A key concept of this paper is the sector defined above. Before mentioning how to
use it, we introduce the central measure $\mu_{\omega}$ of a state $\omega$ on
$\mathfrak{A}$. For a positive linear functional $\omega$ on $\mathfrak{A}$, a
positive regular Borel measure $\mu$ on $E_{\mathfrak{A}}$ is called a
barycentric measure of $\omega$, which is called the barycenter of $\mu$ and
denoted also by $b(\mu)$, if it satisfies
\[
\omega(A)=\int\rho(A)\;d\mu(\rho),\hspace{4mm}A\in\mathfrak{A}.
\]
It is proved in \cite[Theorem 4.1.25]{BR1} that there is a one-to-one
correspondence between the abelian von Neumann subalgebras $\mathcal{B}$ of
$\pi_{\omega}(\mathfrak{A})^{\prime}$ and the orthogonal measures\footnote{See
in \cite[Section 4.1]{BR1}.} $\mu=:\mu_{\mathcal{B}}$ of $\omega$. In the case
that $\mathcal{B}$ is a subalgebra of the center $\mathfrak{Z}_{\omega
}(\mathfrak{A})$, the corresponding orthogonal measure $\mu_{\mathcal{B}}$ is
called a subcentral measure of $\omega$ and satisfies the following property:
for each $\Delta\in\mathcal{B}(E_{\mathfrak{A}})$, GNS representations
$\pi_{\omega_{\Delta}}$ and $\pi_{\omega_{E_{\mathfrak{A}}\backslash\Delta}}$
of $\displaystyle{\omega_{\Delta}=\int_{\Delta}\rho\;d\mu_{\mathcal{B}}(\rho
)}$ and $\displaystyle{\omega_{E_{\mathfrak{A}}\backslash\Delta}%
=\int_{E_{\mathfrak{A}}\backslash\Delta}\!\!\!\!\!\rho\;d\mu_{\mathcal{B}%
}(\rho)}$, respectively, are unitarily equivalent to the subrepresentations of
$\pi_{\omega}$, and are disjoint each other. In particular, $\mu_{\omega}%
:=\mu_{\mathfrak{Z}_{\omega}(\mathfrak{A})}$ is called the central measure of
$\omega$, and pseudosupported\footnote{A measure $\mu$ on a Borel space $K$ is
said to be pseudosupported by an arbitary set $A\subseteq K$ if $\mu(B)=0$ for
all Baire sets $B$ such that $B\cap A=\emptyset$.} by the factor states
$F_{\mathfrak{A}}$.That is, the central measure $\mu_{\omega}$ of a state
$\omega$ decomposes its barycenter into different sectors. It is also seen
that each state $\omega$ has a unique central measure $\mu_{\omega}$ and is
decomposed into sectors by $\mu_{\omega}$. The physical interpretation of a
sector will be discussed in the next subsection.

\subsection{Composite Systems and Measuring Processes}

A measuring process is a vital physical process to characterize quantum systems
and access actual stuations. It is widely accepted that apparatuses interact
with the system under consideration and output values of observables. This
fact is formulated here by a composite system of the system and an apparatus.

Let $\mathfrak{A}$ be a C$^{\ast}$-algebra of the system under consideration,
$\omega$ a state on $\mathfrak{A}$, and $(\pi_{\omega}, \mathcal{H}_{\omega},
\Omega_{\omega})$ a GNS representation of $\mathfrak{A}$ with respect to
$\omega$. We select an observable $A=A^{\ast}\in\pi_{\omega}(\mathfrak{A}%
)^{\prime\prime}$ to be measured and use $C_{0}(A):=\{ f(A)| f\in
C_{0}(Sp(A))\}$ as the algebra of an apparatus measuring $A$, where the
spectrum $Sp(A):=\{\lambda\in\mathbb{C}|\;A-\lambda1\;\text{is}%
\;\text{not invertible\}}(\subseteq\mathbb{R})$ of $A$. Then, the algebra of the
composite system of the system and the apparatus is defined as $\mathfrak{A}%
\otimes_{\mathrm{m}}C_{0}(A)$\footnote{The injective C$^{\ast}$-tensor product
$\mathfrak{A}_{1}\otimes_{\mathrm{m}} \mathfrak{A}_{2}=\mathfrak{A}_{1}%
\otimes_{min} \mathfrak{A}_{2}$ of two C$^{\ast}$-algebras $\mathfrak{A}_{1}$
and $\mathfrak{A}_{2}$ is the completion of the algebraic tensor product
$\mathfrak{A}_{1}\otimes_{alg} \mathfrak{A}_{2}$ by the norm $\Vert\cdot
\Vert_{min}$ defined by $\Vert C\Vert_{min}=\sup\Vert(\pi_{1}\otimes\pi
_{2})(C)\Vert$, $C\in\mathfrak{A}_{1}\otimes_{alg} \mathfrak{A}_{2}$, where
$\pi_{1}$ and $\pi_{2}$ run over all representations of $\mathfrak{A}_{1}$ and
$\mathfrak{A}_{2}$, respectively. See \cite{T79}, in detail.}. The von Neumann
algebra of this algebra in the representation $\pi_{\omega}\otimes id$ is
calculated as
\[
\{(\pi_{\omega}\otimes id)(\mathfrak{A}\otimes_{\mathrm{m}}C_{0}%
(A))\}^{\prime\prime} =\pi_{\omega}(\mathfrak{A})^{\prime\prime}%
\otimes\mathcal{A},
\]
where $\mathcal{A}=C_{0}(Sp(A))^{\prime\prime}$ is a von Neumann subalgebra of
$\text{\boldmath $B$}(\mathcal{H}_{\omega})$, with center
\begin{align*}
\mathfrak{Z}_{\pi_{\omega}\otimes id}(\mathfrak{A}\otimes_{\mathrm{m}}%
C_{0}(A))  &  :=\{(\pi_{\omega}\otimes id)(\mathfrak{A}\otimes
_{\mathrm{m}}C_{0}(A))\}^{\prime\prime} \cap\{(\pi_{\omega}\otimes
id)(\mathfrak{A}\otimes_{\mathrm{m}}C_{0}(A))\}^{\prime}\\
&  = (\pi_{\omega}(\mathfrak{A})^{\prime\prime}\otimes\mathcal{A})\cap
(\pi_{\omega}(\mathfrak{A})^{\prime\prime}\otimes\mathcal{A})^{\prime}\\
&  =(\pi_{\omega}(\mathfrak{A})^{\prime\prime}\otimes\mathcal{A})\cap
(\pi_{\omega}(\mathfrak{A})^{\prime}\otimes\mathcal{A}^{\prime})\\
&  = (\pi_{\omega}(\mathfrak{A})^{\prime\prime}\cap\pi_{\omega}(\mathfrak{A}%
)^{\prime})\otimes(\mathcal{A}\cap\mathcal{A}^{\prime})\\
&  =\mathfrak{Z}_{\omega}(\mathfrak{A})\otimes\mathcal{A}.
\end{align*}
Therefore, the algebra of observables of the apparatus is contained in the
center of the algebra of the composite system. Assume, from now on, that the 
state $\omega$ of the system is a factor for simplicity.

Next, let us consider the definition of measuring interactions and their physical meaning.
In this paper, measuring interactions are defined by automorphisms on the algebra
$\pi_\omega(\mathfrak{A})^{\prime\prime}\otimes \mathcal{A}$ of the composite system.
When we fix a standard form\footnote{A standard form of a von Neumann algebra $\mathcal{M}$
on a Hilbert space $\mathcal{H}$ is a quadtuple $(\mathcal{M},\mathcal{H},J,\mathcal{P})$
consistuting of $\mathcal{M}$,
$\mathcal{H}$, a unitary involution $J$, a self-dual cone $\mathcal{P}$ in $\mathcal{H}$,
which satisfy the followings: (i) $J\mathcal{M}J=\mathcal{M}^\prime$;
(ii) $JAJ=A^\ast$, $A\in\mathfrak{Z}(\mathcal{M})$; (iii) $J\xi=\xi$, $\xi\in\mathcal{P}$;
(iv) $AJAJ\mathcal{P}\subset\mathcal{P}$, $A\in\mathcal{M}$. See \cite{T02}.}
 of the von Neumann algebra $\pi_\omega(\mathfrak{A})^{\prime\prime}\otimes \mathcal{A}$,
any automorphism $\alpha$ is unitarily implemented,
i.e., there exists a unitary operator $U$ such that $\alpha(A)=U^{\ast}A U$,
$A\in \pi_\omega(\mathfrak{A})^{\prime\prime}\otimes \mathcal{A}$.
Owing to this result, we can reproduce unitary opearators on a Hilbert space 
as measuring interactions in general theory of measuring processes in Hilbert space formalism.
Let $\alpha$ be an automorphism on a von Neumann algebra $\mathcal{M}$.
We define $\alpha^\ast:E_\mathcal{M}\rightarrow E_\mathcal{M}$ by $(\alpha^\ast\omega)(B)=\omega(\alpha(B))$
for $B\in\mathcal{M}$.

The most important and fundamental measuring interaction is that of ideal measurement of $A$ denoted by $\alpha(W)$, which is to be combined with the concept of a neutral position, vital for the description of function of apparatuses \cite{HO09}. A neutral position, denoted by $m_\mathcal{A}$, is the 
 \textquotedblleft universal\textquotedblright\ initial state of the apparatus
and correponds to the macroscopically stable position of the measuring pointer realized when the apparatus is isolated (for the detail mathematical formulation, see \cite{HO09}). $\alpha(W)$ is an automorphism on
$\pi_\omega(\mathfrak{A})^{\prime\prime}\otimes \mathcal{A}$ such that
\begin{eqnarray}\label{Born}
 \omega_{(A,\alpha(W))}(B)&:=&\alpha(W)^\ast(\tilde{\omega} \otimes m_\mathcal{A})((\pi_\omega\otimes id)(B))
 \nonumber\\
 &=&\int (\omega_{a}\otimes_{\mathrm{m}} \delta_a)(B)\;P(A\in da\Vert \omega),
\end{eqnarray}
for $B\in \mathfrak{A}\otimes_{\mathrm{m}}C_0(A)$,
where $P(A\in \Delta\Vert \omega)=\langle \Omega_\omega, E^A(\Delta)\Omega_\omega\rangle$,
$E^A$ the spectral measure of $A$ such that $\displaystyle{A=\int a\;dE^A(a)}$, $\Delta\in\mathcal{B}(\mathbb{R})$.
$\omega_a=\tilde{\omega}_a\circ \pi_\omega$ such that $\tilde{\omega}_a(f(A))=f(a)$,
$\delta_a$ the eigenstate on $C_0(A)$ for $a\in Sp(A)$.
The interaction $\alpha(W)$ of ideal measurement should be regarded as
a trigger of a perfect correlation \cite{Oz06} between the state of the system and of the apparatus.

Then, Eq.(\ref{Born}) can be rewritten as
\begin{eqnarray}\label{CM}
 \omega_{(A,\alpha(W))}(B)\!\!\!&=&\!\!\!\int \rho(B)\;d\mu_{\omega_{(A,\alpha)}}(\rho),  \\
\text{supp}\;\mu_{\omega_{(A,\alpha(W))}}
\!\!\!&=&\!\!\!\overline{\{\omega_{a}\otimes_{\mathrm{m}} \delta_a|a\in Sp(A)\}},\nonumber
\end{eqnarray}
for $B\in \mathfrak{A}\otimes_{\mathrm{m}}C_0(A)$.
This is because the algebra $\mathfrak{A}\otimes_{\mathrm{m}}C_0(A)$ of the composite system
has center $\mathbb{C}1\otimes \mathcal{A}$, and a family of the eigenvalues of an observable $A$ become an index for classifying sectors.
This result shows us that the central measure of a state of
the composite system after measurement play the role of a probability measure appearing
in Born rule \cite{OOS11}. We can generalize the above argument for
any automorphism $\alpha$ on $\pi_\omega(\mathfrak{A})^{\prime\prime}\otimes \mathcal{A}$,
and discuss the optimization of measurement.
\begin{eqnarray}
 \omega_{(A,\alpha)}(B)&:=&\alpha^\ast(\tilde{\omega} \otimes m_\mathcal{A})((\pi_\omega\otimes id)(B))
 \nonumber\\
 &=&\!\!\!\int \rho(B)\;d\mu_{\omega_{(A,\alpha)}}(\rho),  \\
\text{supp}\;\mu_{\omega_{(A,\alpha(W))}}
\!\!\!&=&\!\!\!\overline{\{\omega_{\alpha,a}\otimes_{\mathrm{m}} \delta_a|a\in Sp(A)\}}.\nonumber
\end{eqnarray}
for $B\in \mathfrak{A}\otimes_{\mathrm{m}}C_0(A)$, where, for each $a\in Sp(A)$,
$\omega_{\alpha,a}$ depending on $\alpha$ is the state of the system when the value $a\in Sp(A)$
is measured in an apparatus.
Therefore, we use the central measure of a state of the composite system after measuring interaction.
If $\omega$ above is not a factor, we should replace the central measure $\mu_{\omega_{(A,\alpha)}}$
by the subcentral measure of $\omega_{(A,\alpha)}$ corresponding to the abelian von Neumann
subalgebra $\mathbb{C}1\otimes \mathcal{A}$.

We close this section with discussion on the concept of a sector.
A sector can be interpreted as a physical origin of macroscopic indicators of classification,
and disjointness among different sectors represents a 
\textquotedblleft conditional\textquotedblright\ stability of macroscopic structures \cite{Oj05,Oj03}.
As a typical example, sectors in the composite system of the system under consideration with an apparatus
are classified by the eigenvalues of an observable to be measured, which was already discussed.
We would like to emphasize that quantum measurement theory requires searching
nontrivial intersectorial structure in each sector of the system.
On the other hand, non-factor states of the system correspond to such physical situations that order parameters
like temparature are assumed before any measurement. This case is often overlooked.

\section{Hiai-Ohya-Tsukada Theorem and its Application}
\subsection{Hiai-Ohya-Tsukada Theorem}
We denote by $M_{1}(\Omega)$ the space of probability measures on a measurable space
$(\Omega,\mathcal{F})$. We define the relative entropy of the probability measure $\nu\in M_1(\Omega)$
with respect to $\mu\in M_1(\Omega)$ as
\begin{equation}
D(\nu\Vert\mu)=\left\{
\begin{array}[c]{c}
\displaystyle{\int d\nu(\rho)\log\dfrac{d\nu}{d\mu}(\rho)\;\;\;\;(\nu\ll\mu)}\\
+\infty\;\;\;\;(\mathrm{otherwise}).
\end{array}
\right.
\end{equation}
If there exists a measure $\sigma$ on $\Omega$ such that $\nu,\mu \ll \sigma$,
$D(\nu\Vert\mu)$ is also denoted by $D(q\Vert p)$ where $q:=\dfrac{d\nu}{d\sigma}$ and $p:=\dfrac{d\mu}{d\sigma}$.

Let us explain the formulations of relative entropy due to Araki and to Uhlmann \cite{Ar77,U77,HOT83}.
Let $(\mathcal{M},\mathcal{H},J,\mathcal{P})$ be a standard form of a von Neumann algebra $\mathcal{M}$,
and $\varphi, \psi$ be normal states on $\mathcal{M}$. There exist unique $\Phi,\Psi\in\mathcal{P}$ such that
$\varphi(A)=\langle \Phi,A\Phi \rangle, \psi(A)=\langle \Psi,A\Psi \rangle$ for all $A\in\mathcal{A}$.
The operator $S_{\Phi,\Psi}$ with the domain
$Dom(S_{\Phi,\Psi})=\mathcal{M}\Psi+(1-s^\mathcal{M^\prime}(\Psi))\mathcal{H}$ is defined by
\begin{equation*}
S_{\Phi,\Psi}(A\Psi+\Omega)=s^\mathcal{M}(\Psi)A^\ast\Phi,\hspace{3mm}A\in\mathcal{M},s^\mathcal{M^\prime}(\Psi)\Omega=0,
\end{equation*}
where $s^\mathcal{M}(\Psi)$ is $\mathcal{M}$-support of $\Psi$, i.e., the minimal projection $E$ in $\mathcal{M}$
such that $(1-E)\Psi=0$. $S_{\Phi,\Psi}$ is seen to be a closable operator. Then the relative modular operator
$\Delta_{\Phi,\Psi}$ is defined by $\Delta_{\Phi,\Psi}=(S_{\Phi,\Psi})^\ast \overline{S_{\Phi,\Psi}}$, and let
$\Delta_{\Phi,\Psi}=\displaystyle{\int_{-\infty}^\infty \lambda dE_{\Phi,\Psi}(\lambda)}$ be the spectral
decomposition of $\Delta_{\Phi,\Psi}$. Araki's relative entropy $S(\varphi\Vert\psi)_{\mathrm{Araki}}$ is defined by
\begin{equation*}
S(\varphi\Vert\psi)_{\mathrm{Araki}}=
\left\{
\begin{array}{ll}
\displaystyle{\int_{-\infty}^\infty \log\lambda\;d\langle \Phi,E_{\Phi,\Psi}(\lambda)\Phi \rangle},&\\
\quad\quad\quad(s(\varphi)\leq s(\psi)),& \\
+\infty,\quad(otherwise),&
\end{array}
\right.
\end{equation*}
where $s(\varphi)$ is the minimal projection $E$ in $\mathcal{M}$ such that $\varphi(1-E)=0$ and
called the support of $\varphi$.

We then define Uhlmann's relative entropy.
For two seminorms $p$ and $q$ on a complex linear space $\mathcal{L}$,
the quadratical mean $QM(p,q)$ is defined by
\begin{equation*}
QM(p,q)(x)=\sup_{\alpha\in \mathcal{S}(p,q)} \alpha(x,x)^\frac{1}{2},\hspace{3mm}x\in\mathcal{L},
\end{equation*}
where $\mathcal{S}(p,q)$ is the set of all positive hermitian forms $\alpha$ on $\mathcal{L}$
satisfying $|\alpha(x,y)|\leq p(x)q(y)$ for all $x,y\in\mathcal{L}$.
A fuction $[0,1]\ni t\mapsto p_t$ whose values are seminorms on $\mathcal{L}$ is called a quadratical interpolation
from $p$ to $q$ if it satisfies the following conditions:
(i) for each $x\in\mathcal{L}$ the function $t\mapsto p_t(x)$ is continuous;
(ii) it satisfies the following properties:
\begin{eqnarray*}
p_t &=&QM(p_{t_1},p_{t_2}),\hspace{2mm}t=\frac{t_1+t_2}{2},t_1,t_2\in[0,1], \\
p_{\frac{1}{2}} &=&QM(p,q), \\
p_{\frac{t}{2}} &=&QM(p,p_t),\hspace{2mm}t\in[0,1], \\
p_{\frac{1+t}{2}} &=&QM(p_t,q),\hspace{2mm}t\in[0,1].
\end{eqnarray*}
Furthermore, for each positive hermitian forms $\alpha$ and $\beta$ there exists a unique function
$[0,1]\ni t\mapsto QF_t(\alpha,\beta)$ whose values are positive hermitian forms on $\mathcal{L}$
such that the function $p_t$ defined by $p_t(x)=QF_t(\alpha,\beta)(x,x)^\frac{1}{2}$ for each $x\in\mathcal{L}$
is the quadratical interpolation from $\alpha(x,x)^\frac{1}{2}$ to $\beta(x,x)^\frac{1}{2}$.
The relative entropy functional $S(\alpha\Vert\beta)(x)$ of $\alpha$ and $\beta$ is defined by
\begin{equation*}
S(\alpha\Vert\beta)(x)=-\liminf_{t\rightarrow +0}
\frac{QF_t(\alpha,\beta)(x,x)-\alpha(x,x)}{t}.
\end{equation*}

Let $\mathcal{A}$ be a $\ast$-algebra, and $\varphi,\psi$ be a positive linear functionals on $\mathcal{A}$.
The Uhlmann's relative entropy $S(\varphi\Vert\psi)_{\mathrm{Uhlmann}}$ is defined by
\begin{equation*}
S(\varphi\Vert\psi)_{\mathrm{Uhlmann}}=S(\varphi^R\Vert\psi^L)(1),
\end{equation*}
where $\varphi^R$ and $\psi^L$ are the positive hermitian forms given by $\varphi^R(A,B)=\varphi(BA^\ast)$,
$\psi^L(A,B)=\psi(A^\ast B)$.

In \cite{HOT83}, the following three important theorems are proved:
\begin{theo}
For any positive linear functionals $\varphi,\psi$ on a von Neumann algebra $\mathcal{M}$,
$S(\varphi\Vert\psi)_{\mathrm{Uhlmann}}=S(\varphi\Vert\psi)_{\mathrm{Araki}}$.
\end{theo}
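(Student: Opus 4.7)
The plan is to identify both sides with the spectral integral $\int\log\lambda\,d\langle\Phi,E_{\Phi,\Psi}(\lambda)\Phi\rangle$ by working inside the standard form $(\mathcal{M},\mathcal{H},J,\mathcal{P})$. First I would reduce to the case of normal positive linear functionals with implementing vectors $\Phi,\Psi\in\mathcal{P}$; any failure of the Araki support condition $s(\varphi)\leq s(\psi)$, including contributions from singular parts, can be shown separately to force $+\infty$ on both sides. Under these reductions the relative modular operator $\Delta_{\Phi,\Psi}$ and its spectral resolution $E_{\Phi,\Psi}$ are well defined, and Araki's entropy equals $\langle\Phi,\log\Delta_{\Phi,\Psi}\Phi\rangle$ whenever the support condition holds.

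The core of the proof is a closed-form evaluation of the quadratical interpolation $QF_t(\varphi^R,\psi^L)$. My target identity is
\begin{equation*}
QF_t(\varphi^R,\psi^L)(\mathbf{1},\mathbf{1})=\langle\Phi,\Delta_{\Phi,\Psi}^{-t}\Phi\rangle=\int_{-\infty}^{\infty}\lambda^{-t}\,d\langle\Phi,E_{\Phi,\Psi}(\lambda)\Phi\rangle,\qquad t\in[0,1],
\end{equation*}
which at $t=0$ reduces to $\varphi(\mathbf{1})=\varphi^R(\mathbf{1},\mathbf{1})$ and at $t=1$ to $\psi(\mathbf{1})=\psi^L(\mathbf{1},\mathbf{1})$ via the Tomita--Takesaki relations $J\Delta_{\Phi,\Psi}^{1/2}A\Psi=A^{\ast}\Phi$. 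To establish it I would exhibit an explicit positive-hermitian-form-valued candidate built from $\Delta_{\Phi,\Psi}^{z/2}$ and $J$, verify that the induced seminorms satisfy the four defining properties of a quadratical interpolating family (continuity in $t$, correct endpoints at $t=0,1$, and the three geometric-mean identities $p_t=QM(p_{t_1},p_{t_2})$), and then invoke uniqueness of the quadratical interpolating family attached to $(\varphi^R,\psi^L)$.

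Once the interpolation identity is in hand, the conclusion is a spectral-calculus limit. Setting $f_t(\lambda)=(\lambda^{-t}-1)/t$, one has $f_t(\lambda)\to-\log\lambda$ pointwise as $t\to 0^+$, and splitting the range into $\{0<\lambda\leq 1\}$ and $\{\lambda>1\}$ lets one apply monotone or dominated convergence separately to the two sign components of $d\langle\Phi,E_{\Phi,\Psi}(\lambda)\Phi\rangle$. This yields
\begin{equation*}
-\liminf_{t\to 0^+}\frac{QF_t(\varphi^R,\psi^L)(\mathbf{1},\mathbf{1})-\varphi(\mathbf{1})}{t}=\int_{-\infty}^{\infty}\log\lambda\,d\langle\Phi,E_{\Phi,\Psi}(\lambda)\Phi\rangle=S(\varphi\Vert\psi)_{\mathrm{Araki}},
\end{equation*}
whose left-hand side is by definition the Uhlmann entropy $S(\varphi^R\Vert\psi^L)(\mathbf{1})=S(\varphi\Vert\psi)_{\mathrm{Uhlmann}}$.

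The principal obstacle will be the verification of the geometric-mean axiom $p_t=QM(p_{t_1},p_{t_2})$ for the candidate built from $\Delta_{\Phi,\Psi}^{-t/2}$. This amounts to a sharp Cauchy--Schwarz-type inequality between fractional powers of the relative modular operator, whose natural proof is via a Hadamard three-lines argument applied to the analytic family $z\mapsto\Delta_{\Phi,\Psi}^{-z/2}$ on the strip $\mathrm{Re}\,z\in[0,1]$. Care is needed in selecting a dense set of test vectors (e.g.\ those obtained from spectral truncations of $\Delta_{\Phi,\Psi}$) lying in the domain of $\Delta_{\Phi,\Psi}^{-z/2}$ throughout the strip, and in passing to the general case by monotone approximation. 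This is the one step that draws on the full weight of Tomita--Takesaki theory; the remaining reductions and the final spectral limit are comparatively routine.
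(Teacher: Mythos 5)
Your outline is correct and is essentially the argument of Hiai--Ohya--Tsukada, which is exactly what the paper relies on: Theorem~2 is stated here without proof as a citation of \cite[HOT83]{HOT83}, and the identity you target, $QF_t(\varphi^R,\psi^L)(1,1)=\langle\Phi,\Delta_{\Phi,\Psi}^{-t}\Phi\rangle$, coincides (via $J\Delta_{\Phi,\Psi}^{1/2}\Psi=\Phi$) with the form $QF_t(\varphi^R,\psi^L)(1,1)=\langle\Psi,\Delta_{\Phi,\Psi}^{1-t}\Psi\rangle$ that the paper itself quotes from \cite[p.~129]{HOT83} when proving the $\alpha$-analogue in Appendix~A. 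The only points needing the care you already flag are the domain of $\Delta_{\Phi,\Psi}^{-t}$ when the supports are not full (the form $\Vert\Delta_{\Phi,\Psi}^{(1-t)/2}\Psi\Vert^2$ avoids negative powers) and the case where the positive part of $\log\lambda$ is non-integrable, which is why the definition uses $\liminf$.
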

Therefore, we do not need to distinguish Araki's relative entropy and Uhlmann's one,
and call them the quantum relative entropy. They are denoted by $S(\varphi\Vert\psi)$.
\begin{theo}
Let $\varphi,\psi$ be positive linear functionals on a C$^\ast$-algebra $\mathfrak{A}$
and $\pi$ be a nondegenerate representation of $\mathfrak{A}$ on a Hilbert space.
If there are the normal extentions $\tilde{\varphi},\tilde{\psi}$ of $\varphi,\psi$ to
$\pi(\mathfrak{A})^{\prime\prime}$ such that
$\varphi(A)=\tilde{\varphi}(\pi(A)),\psi(A)=\tilde{\psi}(\pi(A))$, then
$S(\tilde{\varphi}\Vert\tilde{\psi})=S(\varphi\Vert\psi)$.
\end{theo}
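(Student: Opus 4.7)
Using the preceding theorem, I may work entirely in the Uhlmann formulation on both sides: the statement becomes $S(\varphi^R\Vert\psi^L)(1)=S(\tilde{\varphi}^R\Vert\tilde{\psi}^L)(1)$, where the left-hand side is computed from positive hermitian forms on the $\ast$-algebra $\mathfrak{A}$ and the right-hand side from the analogous forms on the von Neumann algebra $\pi(\mathfrak{A})^{\prime\prime}$. The hypothesis $\varphi=\tilde{\varphi}\circ\pi$, $\psi=\tilde{\psi}\circ\pi$ gives the pullback identities $\varphi^R(A,B)=\tilde{\varphi}^R(\pi(A),\pi(B))$ and $\psi^L(A,B)=\tilde{\psi}^L(\pi(A),\pi(B))$, so the forms on $\mathfrak{A}$ are the $\pi$-pullbacks of the forms on $\pi(\mathfrak{A})$. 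The two key auxiliary ingredients are the normality of $\tilde{\varphi},\tilde{\psi}$ and the Kaplansky density theorem, which provides $\sigma$-strong density of the unit ball of $\pi(\mathfrak{A})$ in that of $\pi(\mathfrak{A})^{\prime\prime}$.

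For the inequality $S(\varphi\Vert\psi)\leq S(\tilde{\varphi}\Vert\tilde{\psi})$ I invoke the standard monotonicity of Uhlmann's relative entropy under a unital $\ast$-homomorphism (equivalently, a unital completely positive map); this is already established in Uhlmann's original work and is essentially a consequence of the interpolation characterization of $QF_t$. Applied to $\pi:\mathfrak{A}\to\pi(\mathfrak{A})^{\prime\prime}$ together with the pullback identities above, it yields the desired estimate.

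For the reverse inequality, the core task is to show that the quadratical interpolation $QF_t(\tilde{\varphi}^R,\tilde{\psi}^L)(1,1)$ is determined by the restrictions of $\tilde{\varphi}^R,\tilde{\psi}^L$ to $\pi(\mathfrak{A})$. I would proceed by first verifying the analogous equality for the quadratical mean: $QM(\tilde{\varphi}^R,\tilde{\psi}^L)(1)=QM(\varphi^R,\psi^L)(1)$. Any $\alpha\in\mathcal{S}(\tilde{\varphi}^R,\tilde{\psi}^L)$ satisfies $|\alpha(x,y)|^{2}\leq\tilde{\varphi}(yy^{\ast})\,\tilde{\psi}(x^{\ast}x)$; because $\tilde{\varphi},\tilde{\psi}$ are normal and $\pi(\mathfrak{A})_{1}$ is $\sigma$-strongly dense in $(\pi(\mathfrak{A})^{\prime\prime})_{1}$, the value $\alpha(1,1)$ is approximated by $\alpha$'s restriction to $\pi(\mathfrak{A})\times\pi(\mathfrak{A})$, which pulls back to a form on $\mathfrak{A}$ bounded by $\varphi^R,\psi^L$. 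Propagating this identity through the four defining interpolation equations gives $QF_t(\tilde{\varphi}^R,\tilde{\psi}^L)(1,1)=QF_t(\varphi^R,\psi^L)(1,1)$ for all dyadic $t\in[0,1]$, hence for all $t$ by the continuity that is built into the definition of a quadratical interpolation. Taking $-\liminf_{t\to+0}$ of the normalized difference at $t=0$ delivers the equality of relative entropies.

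\textbf{Main obstacle.} The delicate step is the density/normality argument for the quadratical mean: one must verify that the supremum defining $QM$ at the vector $1$ is not increased by enlarging the underlying $\ast$-algebra from $\pi(\mathfrak{A})$ to $\pi(\mathfrak{A})^{\prime\prime}$. The ``$\geq$'' direction uses only restriction of forms, but the ``$\leq$'' direction requires that a near-extremal $\alpha$ on $\pi(\mathfrak{A})^{\prime\prime}$ be approximated, at $(1,1)$, by forms arising from $\pi(\mathfrak{A})$, and this is where normality of $\tilde{\varphi},\tilde{\psi}$ combined with Kaplansky density is essential. A conceptually cleaner alternative, should the direct approach become unwieldy, is to route the argument through the Araki formulation: show via the previous theorem that the GNS vector representative of $\tilde{\varphi}$ in the standard form of $\pi(\mathfrak{A})^{\prime\prime}$ and the relative modular operator $\Delta_{\tilde{\Phi},\tilde{\Psi}}$ depend only on the restrictions $\tilde{\varphi}|_{\pi(\mathfrak{A})},\tilde{\psi}|_{\pi(\mathfrak{A})}$ (because normal extensions to $\pi(\mathfrak{A})^{\prime\prime}$ are unique), whence the Araki integral is intrinsic and equals $S(\varphi\Vert\psi)$.
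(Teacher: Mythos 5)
First, note that the paper does not supply its own proof of this statement: it is quoted as one of the three theorems established in Hiai--Ohya--Tsukada, and the closest thing in the text is the appendix proof of the $\alpha$-analogue (Theorem 8), which simply invokes \cite[Lemma 3.1]{HOT83}, namely the identity $QF_t(\varphi^R,\psi^L)(A,B)=QF_t(\tilde{\varphi}^R,\tilde{\psi}^L)(\pi(A),\pi(B))$ for \emph{all} $A,B\in\mathfrak{A}$. Your proposal is in effect an attempt to reprove that lemma. Your easy direction (monotonicity of Uhlmann's entropy under the unital $\ast$-homomorphism $\pi$, combined with the pullback identities $\varphi^R=\tilde{\varphi}^R\circ(\pi\times\pi)$, $\psi^L=\tilde{\psi}^L\circ(\pi\times\pi)$) is fine, and your closing ``cleaner alternative'' --- passing to the Araki picture and observing that the standard-form representatives and the relative modular operator of the normal extensions are already determined by the GNS data of $\varphi,\psi$ --- is essentially the route actually taken in \cite{HOT83}.

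The genuine gap is in the propagation step of your primary argument. You establish (by restriction one way, and by normality plus Kaplansky density the other way) that $QM(\tilde{\varphi}^R,\tilde{\psi}^L)(1)=QM(\varphi^R,\psi^L)(1)$, and then claim this ``propagates through the four defining interpolation equations.'' It does not: $QM(p,q)(x)$ is a supremum over hermitian forms constrained by $p$ and $q$ \emph{on the whole algebra}, so equality of two interpolation families at the single vector $1$ gives no control over $QM$ of those families at $1$ in the next dyadic generation. What you actually need is the pullback identity at every element of $\pi(\mathfrak{A})$, i.e.\ that the restriction to $\pi(\mathfrak{A})$ of the interpolation built on $\pi(\mathfrak{A})^{\prime\prime}$ is itself the quadratical interpolation of the restricted seminorms (and then appeal to uniqueness of quadratical interpolations). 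That in turn forces you to rerun the density/extension argument at each dyadic stage for the intermediate seminorms $p_t$, whose $\sigma$-strong continuity on bounded sets is not given a priori by the normality of the endpoint functionals; one has to derive it, e.g.\ from the bound $p_t(x)^2\leq(1-t)\,\tilde{\psi}(x^{\ast}x)+t\,\tilde{\varphi}(xx^{\ast})$ coming from the modular-operator representation of $QF_t$. At that point you are already using the Araki-side machinery, which is why the modular-operator route (your fallback, and the one in \cite{HOT83}) is the argument to write down rather than keep in reserve.
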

We take as $\pi$ a GNS representation $\pi_{\varphi+\psi}$ corresponding to $\varphi+\psi$ in this paper.
The following theorem is our main concern in this section.

\begin{theo}[Hiai-Ohya-Tsukada Theorem\cite{HOT83}]\label{HOT}
Let $\varphi,\psi$ be states on $\mathfrak{A}$ with barycentric measures $\mu,\nu$, respectively.
If there exists a subcentral measure $m$ such that $\mu,\nu\ll m$, then $S(\varphi\Vert\psi)=D(\mu\Vert\nu)$.
\end{theo}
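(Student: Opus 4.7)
The plan is to pass to the von Neumann-algebraic setting and reduce the quantum relative entropy to a classical one via the abelian algebra encoded by the subcentral measure. Using the theorem on normal extensions just above, I can replace $\varphi,\psi$ by their normal extensions $\tilde\varphi,\tilde\psi$ to $\pi(\mathfrak{A})''$, where $\pi=\pi_{\varphi+\psi}$, without changing the value of $S(\varphi\Vert\psi)$. The Bratteli--Robinson bijection cited in Section~2 then identifies the subcentral measure $m$ with an abelian von Neumann subalgebra $\mathcal{B}\subseteq\mathfrak{Z}_{\varphi+\psi}(\mathfrak{A})$, together with an isomorphism $\mathcal{B}\cong L^\infty(E_\mathfrak{A},m)$ under which the restrictions $\tilde\varphi|_\mathcal{B}$ and $\tilde\psi|_\mathcal{B}$ are realised by the measures $\mu$ and $\nu$ themselves.

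For the inequality $D(\mu\Vert\nu)\leq S(\varphi\Vert\psi)$, I would apply monotonicity of Araki's relative entropy under restriction to $\mathcal{B}$. On the abelian algebra $\mathcal{B}$ the quantum relative entropy reduces to the classical Kullback--Leibler divergence, which by the identification above equals $D(\mu\Vert\nu)$.

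For the reverse inequality the key input is sector-disjointness. Fix a finite Borel partition $E_\mathfrak{A}=\bigsqcup_{i=1}^k \Delta_i$ and set $\varphi_{\Delta_i}=\int_{\Delta_i}\rho\,d\mu(\rho)$, $\psi_{\Delta_i}=\int_{\Delta_i}\rho\,d\nu(\rho)$. The property of subcentral measures recalled in Section~2 says that for $i\neq j$ the GNS representations of $\varphi_{\Delta_i}$ and $\varphi_{\Delta_j}$, and likewise of $\psi_{\Delta_i}$ and $\psi_{\Delta_j}$, are mutually disjoint. Additivity of Araki's relative entropy across mutually disjoint summands then gives $S(\varphi\Vert\psi)=\sum_i S(\varphi_{\Delta_i}\Vert\psi_{\Delta_i})$. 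Using the scaling identity $S(a\tau\Vert b\tau)=a\log(a/b)$ for any state $\tau$ and scalars $a,b>0$, applied to the piecewise-normalised components, and then refining the partition along a sequence that generates the Borel $\sigma$-algebra of $m$, the right-hand side converges via the classical martingale convergence theorem applied to $d\mu/dm$ and $d\nu/dm$ to $D(\mu\Vert\nu)$.

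The main obstacle will be rigorously promoting additivity from finitely many disjoint summands to the continuous decomposition indexed by $m$, and cleanly handling the singular situation $\mu\not\ll\nu$ where $D(\mu\Vert\nu)=+\infty$. The former I would address by combining the piecewise upper bound coming from coarse partitions with lower semicontinuity of $S(\cdot\Vert\cdot)$ along the refinement, matching the lower-bound step above. The latter reduces to the observation that singularity already forces the support condition $s(\tilde\varphi|_\mathcal{B})\leq s(\tilde\psi|_\mathcal{B})$ to fail on $\mathcal{B}$, so $S(\tilde\varphi|_\mathcal{B}\Vert\tilde\psi|_\mathcal{B})=+\infty$, which by monotonicity propagates to $S(\varphi\Vert\psi)=+\infty$, in agreement with the classical value.
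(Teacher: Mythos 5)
Your lower bound $D(\mu\Vert\nu)\leq S(\varphi\Vert\psi)$ is exactly the paper's argument: the paper (which quotes this theorem from \cite{HOT83} but reveals its intended proof in the Appendix for the $\alpha$-analogue, Theorem 9) realises the subcentral measure $m$ as an abelian subalgebra $\mathcal{C}$ of a center via the $\ast$-isomorphism $\kappa_m:L^\infty(E_\mathfrak{A},m)\rightarrow\mathcal{C}$, identifies $\tilde\varphi|_{\mathcal{C}},\tilde\psi|_{\mathcal{C}}$ with $\mu,\nu$, and applies monotonicity under restriction. Your treatment of the singular case via failure of the support condition on $\mathcal{B}$ is also fine.

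The problem is your upper bound. Additivity over the central projections $\kappa_m(1_{\Delta_i})$ does give
$S(\varphi\Vert\psi)=\sum_i S(\varphi_{\Delta_i}\Vert\psi_{\Delta_i})$, but the scaling identity $S(a\tau\Vert b\tau)=a\log(a/b)$ applies only when the two normalised components are the \emph{same} state. In general
\begin{equation*}
S(\varphi_{\Delta_i}\Vert\psi_{\Delta_i})=\mu(\Delta_i)\,S\!\left(\tfrac{\varphi_{\Delta_i}}{\mu(\Delta_i)}\Big\Vert \tfrac{\psi_{\Delta_i}}{\nu(\Delta_i)}\right)+\mu(\Delta_i)\log\frac{\mu(\Delta_i)}{\nu(\Delta_i)},
\end{equation*}
and the normalised states $\varphi_{\Delta_i}/\mu(\Delta_i)$ and $\psi_{\Delta_i}/\nu(\Delta_i)$ are different averages over $\Delta_i$ (weighted by $d\mu$ and by $d\nu$). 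So your partition identity only yields $S(\varphi\Vert\psi)\geq\sum_i\mu(\Delta_i)\log\bigl(\mu(\Delta_i)/\nu(\Delta_i)\bigr)$, which under refinement recovers the \emph{lower} bound a second time. To get $S\leq D$ you would need the residual $\sum_i\mu(\Delta_i)\,S(\hat\varphi_{\Delta_i}\Vert\hat\psi_{\Delta_i})$ to vanish along the refinement, and nothing in your sketch controls it: lower semicontinuity of $S(\cdot\Vert\cdot)$, which you invoke, produces inequalities in the wrong direction, and the relative entropy is not upper semicontinuous, so the heuristic that the two normalised averages "become close" does not close the gap. Showing that this residual tends to zero is essentially equivalent to the theorem itself.

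The paper avoids all of this: for $S(\varphi\Vert\psi)\leq D(\mu\Vert\nu)$ it uses the unital positive map $\Gamma:\mathfrak{A}\rightarrow C(E_\mathfrak{A})$, $\Gamma(A)(\omega)=\omega(A)$, whose dual sends a measure to its barycenter, so that $\varphi=\Gamma^\ast\mu$, $\psi=\Gamma^\ast\nu$ and monotonicity of the relative entropy under Schwarz-positive unital maps gives the upper bound in one line (note this direction does not even use subcentrality of $m$; that hypothesis is needed only for your, and the paper's, lower bound). You should replace your partition argument by this monotonicity step.
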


By this theorem we can calculate the quantum relative entropy as the measure-theoretical relative entropy.

\subsection{An Application of Theorem 4}
Let $\mathfrak{A}$ be a C$^\ast$-algebra,
$\omega_1,\omega_2$ states on $\mathfrak{A}$, and $(\pi_\omega, \mathcal{H}_\omega, \Omega_\omega)$
a GNS representation of $\mathfrak{A}$ with respect to $\omega=\omega_1+\omega_2$.
We define the set of measurements comparing $\omega_1$ with $\omega_2$ as follows:
\begin{eqnarray*}
\mathrm{M}(\omega_1\prec\omega_2) \!\!\!&=&\!\!\!\bigcup_{A=A^\ast\in\pi_{\omega}(\mathfrak{A})^{\prime\prime}}
\mathrm{M}(A; \omega_1\prec\omega_2), \\
\mathrm{M}(A; \omega_1\prec\omega_2) \!\!\!&=&\!\!\!
\{(A,\alpha) \in \pi_{\omega}(\mathfrak{A})^{\prime\prime}\times Aut(\pi_\omega(\mathfrak{A})^{\prime\prime}\otimes \mathcal{A})|\\
&&\mathcal{A}=C_0(A)^{\prime\prime}, \mu_{\omega_{1,(A,\alpha)}}\ll \mu_{\omega_{2,(A,\alpha)}} \},
\end{eqnarray*}
where
\begin{eqnarray*}
\omega_{1,(A,\alpha)} &=& \alpha^\ast(\tilde{\omega}_1\otimes m_\mathcal{A})\circ(\pi_\omega\otimes id),\\
\omega_{2,(A,\alpha)} &=&\alpha^\ast(\tilde{\omega}_2\otimes m_\mathcal{A})\circ(\pi_\omega\otimes id),
\end{eqnarray*}
and the C$^\ast$-algebra of the composite system is $\mathfrak{A}\otimes_{\mathrm{m}}C_0(A)$.
In order to give the validity of the definition of $\mathrm{M}(\omega_1\prec\omega_2)$, we prepare the next lemma.
\begin{lemm}
Let $\omega_1,\omega_2$ be states on a C$^\ast$-algebra $\mathfrak{A}$, $\alpha\in Aut(\mathfrak{A})$.
Then, $S(\omega_1\Vert\omega_2)=S(\alpha^\ast\omega_1\Vert\alpha^\ast\omega_2)$.
\end{lemm}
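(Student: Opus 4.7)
The plan is to reduce everything to the representation $\pi_\omega$ associated with $\omega=\omega_1+\omega_2$ and appeal to Theorem 3 (the normal-extension invariance). First I would form the composition $\pi_\omega\circ\alpha$, which is itself a nondegenerate $*$-representation of $\mathfrak{A}$. Because $\alpha$ is an automorphism, $\alpha(\mathfrak{A})=\mathfrak{A}$, so
\[
(\pi_\omega\circ\alpha)(\mathfrak{A})''=\pi_\omega(\alpha(\mathfrak{A}))''=\pi_\omega(\mathfrak{A})''.
\]
Hence the \emph{same} von Neumann algebra serves as the ambient algebra for both $\pi_\omega$ and $\pi_\omega\circ\alpha$.

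Next I would identify the normal extensions. Since $\omega_i\leq\omega$, each $\omega_i$ admits a normal extension $\tilde{\omega}_i$ to $\pi_\omega(\mathfrak{A})''$ with $\omega_i(A)=\tilde{\omega}_i(\pi_\omega(A))$. For $\alpha^*\omega_i$, I would take the candidate extension on $(\pi_\omega\circ\alpha)(\mathfrak{A})''=\pi_\omega(\mathfrak{A})''$ to be the very same functional $\tilde{\omega}_i$. Checking compatibility is immediate:
\[
(\alpha^*\omega_i)(A)=\omega_i(\alpha(A))=\tilde{\omega}_i(\pi_\omega(\alpha(A)))=\tilde{\omega}_i\bigl((\pi_\omega\circ\alpha)(A)\bigr),
\]
so $\tilde{\omega}_i$ is the desired normal extension of $\alpha^*\omega_i$ with respect to $\pi_\omega\circ\alpha$. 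Applying Theorem 3 twice,
\[
S(\omega_1\Vert\omega_2)=S(\tilde{\omega}_1\Vert\tilde{\omega}_2)=S(\alpha^*\omega_1\Vert\alpha^*\omega_2),
\]
which is the claim.

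The only point requiring a little care is the second equality, because Theorem 3 as stated needs normal extensions relative to the representation used for $\alpha^*\omega_i$; the key observation is that $\pi_\omega\circ\alpha$ generates the same von Neumann algebra as $\pi_\omega$, so no separate GNS construction for $\alpha^*\omega$ is needed and the extension is literally $\tilde{\omega}_i$. Alternatively, and equivalently, one can argue from the Uhlmann side: the $*$-isomorphism $\alpha$ carries the hermitian forms $(\alpha^*\omega_1)^R$ and $(\alpha^*\omega_2)^L$ to $\omega_1^R$ and $\omega_2^L$, the quadratical interpolation $QF_t$ is natural under $*$-isomorphisms, and evaluation at $\alpha(1)=1$ gives the same limit. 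I expect the representation-theoretic argument above to be the cleanest, since it uses machinery already set up in the paper and sidesteps any manipulation of the $QF_t$ construction.
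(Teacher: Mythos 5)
Your proof is correct, but it takes a genuinely different route from the paper's. The paper's argument is a two-line appeal to monotonicity of the relative entropy under completely positive maps: since $\alpha$ and $\alpha^{-1}$ are both CP, one gets $S(\alpha^{\ast}\omega_1\Vert\alpha^{\ast}\omega_2)\leq S(\omega_1\Vert\omega_2)$ and, applying $(\alpha^{-1})^{\ast}$ to the transformed pair, the reverse inequality. You instead exploit Theorem 3 (invariance under passage to normal extensions) together with the observation that $\pi_\omega$ and $\pi_\omega\circ\alpha$ are two nondegenerate representations generating the \emph{same} von Neumann algebra $\pi_\omega(\mathfrak{A})''$, on which the single pair of normal functionals $\tilde{\omega}_1,\tilde{\omega}_2$ simultaneously extends $\omega_1,\omega_2$ (via $\pi_\omega$) and $\alpha^{\ast}\omega_1,\alpha^{\ast}\omega_2$ (via $\pi_\omega\circ\alpha$); both relative entropies are then literally the same number $S(\tilde{\omega}_1\Vert\tilde{\omega}_2)$. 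The steps you need all check out: $\omega_i\leq\omega$ does guarantee $\pi_\omega$-normality of $\omega_i$, and Theorem 3 is stated for an arbitrary nondegenerate representation, so your second application is legitimate. What each approach buys: the paper's is shorter and generalizes at once to any CP map with a CP inverse, at the cost of invoking the Uhlmann monotonicity theorem; yours is more structural, showing that the equality is forced by the identification of normal extensions rather than by an information-processing inequality, and it stays entirely within the Hiai--Ohya--Tsukada machinery already set up in Section 3.
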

\begin{proof}
For a CP map $\Lambda$ on $\mathfrak{A}$, it holds that
\begin{equation*}
S(\Lambda^\ast\omega_1\Vert\Lambda^\ast\omega_2)\leq S(\omega_1\Vert\omega_2),
\end{equation*}
which is a well-known fact \cite{HOT83,OP93}. Thus,
\begin{equation*}
S(\alpha^\ast\omega_1\Vert\alpha^\ast\omega_2)\leq S(\omega_1\Vert\omega_2).
\end{equation*}
Since $\alpha^{-1}$ is also an automorphism on $\mathfrak{A}$,
\begin{equation*}
S(\omega_1\Vert\omega_2)=S((\alpha^{-1})^\ast\alpha^\ast\omega_1\Vert(\alpha^{-1})^\ast\alpha^\ast\omega_2)
\leq S(\alpha^\ast\omega_1\Vert\alpha^\ast\omega_2).
\end{equation*}
The lemma is proved.
\end{proof}

If $\mathrm{M}(\omega_1\prec\omega_2)\neq\emptyset$, then for a $(A,\alpha)\in \mathrm{M}(\omega_1\prec\omega_2)$,
\begin{eqnarray*}
\infty&>&D(\mu_{\omega_{1,(A,\alpha)}}\Vert
 \mu_{\omega_{2,(A,\alpha)}}) \\
 &=&S(\alpha^\ast(\tilde{\omega}_1\otimes m_\mathcal{A})\circ(\pi_\omega\otimes id)\Vert
 \alpha^\ast(\tilde{\omega}_2\otimes m_\mathcal{A})\circ(\pi_\omega\otimes id)) \\
 &=&S(\alpha^\ast(\tilde{\omega}_1\otimes m_\mathcal{A})\Vert
 \alpha^\ast(\tilde{\omega}_2\otimes m_\mathcal{A})) \\
 &=&S(\tilde{\omega}_1\otimes m_\mathcal{A}\Vert\tilde{\omega}_2\otimes m_\mathcal{A})
 =S(\tilde{\omega}_1\Vert\tilde{\omega}_2)=S(\omega_1\Vert\omega_2).
\end{eqnarray*}
The proof of this equality is essentially based on the use of $(A,\alpha)\in\mathrm{M}(\omega_1\prec\omega_2)$,
Theorem 2 and Lemma 1. The central measures $\mu_{\omega_{1,(A,\alpha)}}, \mu_{\omega_{2,(A,\alpha)}}$
of states $\omega_1, \omega_2$, respectively, is specified by measured data, since
Born rule is equivalent to the central measure of the state of the composite system after measurement.
We conclude that we can examine the quantum relative entropy $S(\omega_1\Vert\omega_2)$ of $\omega_1$
with respect to $\omega_2$ statistically, whenever $\mathrm{M}(\omega_1\prec\omega_2)\neq\emptyset$.

By the way, we may conjecture the following statement:
\begin{conj}
If $S(\omega_1\Vert\omega_2)<\infty$, then $\mathrm{M}(\omega_1\prec\omega_2)\neq\emptyset$.
\end{conj}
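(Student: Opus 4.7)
The plan is to exhibit an explicit witness $(A,\alpha(W))\in\mathrm{M}(\omega_1\prec\omega_2)$ built from an ideal measurement, so that the required absolute continuity of central measures reduces to a support comparison of the normal extensions of $\omega_1,\omega_2$ inside $\pi_\omega(\mathfrak{A})''$, where $\omega=\omega_1+\omega_2$.

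First, from $S(\omega_1\Vert\omega_2)<\infty$ and the definition of Araki's relative entropy recalled before Theorem 2, I would extract the support inequality $s(\tilde\omega_1)\leq s(\tilde\omega_2)$ in $\pi_\omega(\mathfrak{A})''$, where $\tilde\omega_i$ are the normal extensions of $\omega_i$ produced by the GNS construction with respect to $\omega$ (the extensions exist because $0\leq\omega_i\leq\omega$, and Theorem 3 applies to give $S(\omega_1\Vert\omega_2)=S(\tilde\omega_1\Vert\tilde\omega_2)$).

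Next, I would take any self-adjoint $A\in\pi_\omega(\mathfrak{A})''$ with non-trivial spectrum and let $\alpha=\alpha(W)$ be the corresponding ideal measurement. By equation (1), the central measure $\mu_{\omega_{i,(A,\alpha(W))}}$ is pseudosupported on $\{\omega_a\otimes_{\mathrm{m}}\delta_a:a\in Sp(A)\}$, and its push-forward along $a\mapsto\omega_a\otimes_{\mathrm{m}}\delta_a$ is the spectral distribution $P(A\in\cdot\,\Vert\,\omega_i)=\tilde\omega_i(E^A(\cdot))$. The absolute continuity $\mu_{\omega_{1,(A,\alpha(W))}}\ll\mu_{\omega_{2,(A,\alpha(W))}}$ then reduces to the corresponding statement for these spectral distributions, which follows from the support inequality: if $\tilde\omega_2(E^A(\Delta))=0$ for a Borel set $\Delta\subseteq Sp(A)$, then $E^A(\Delta)\leq 1-s(\tilde\omega_2)\leq 1-s(\tilde\omega_1)$, hence $\tilde\omega_1(E^A(\Delta))=0$. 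This would place $(A,\alpha(W))$ in $\mathrm{M}(A;\omega_1\prec\omega_2)\subseteq\mathrm{M}(\omega_1\prec\omega_2)$ and prove the conjecture.

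The main obstacle is the non-factor case: when $\omega=\omega_1+\omega_2$ fails to be a factor state, the center $\mathfrak{Z}_\omega(\mathfrak{A})$ is non-trivial, the center of the composite system algebra becomes $\mathfrak{Z}_\omega(\mathfrak{A})\otimes\mathcal{A}$ rather than $\mathbb{C}1\otimes\mathcal{A}$, and the central measure of $\omega_{i,(A,\alpha(W))}$ then decomposes over both the sector index of $\mathfrak{A}$ and $Sp(A)$. The clean identification of the central measure with a classical spectral distribution breaks down, so one must either restrict attention to the subcentral measure associated with the abelian subalgebra $\mathbb{C}1\otimes\mathcal{A}$ (in line with the final remark of Section 2.2) or choose $A$ whose spectral projections separate the relevant part of $\mathfrak{Z}_\omega(\mathfrak{A})$, for instance by incorporating a suitable central element into $A$. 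With either refinement, the support-monotonicity argument above extends without change and yields the conclusion.
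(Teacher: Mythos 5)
First, a point of calibration: the paper does not prove this statement. It is presented as Conjecture~1 and is immediately followed by the remark that it ``is, of course, not trivial mathematically.'' So there is no proof in the paper to compare yours against; what you have written is an attempt to settle what the author leaves open, and it has to be judged on its own. Your first step is sound as far as it goes: finiteness of Araki's relative entropy does force the support inequality $s(\tilde\omega_1)\leq s(\tilde\omega_2)$ for the normal extensions to $\pi_\omega(\mathfrak{A})''$, and the implication $\tilde\omega_2(E^A(\Delta))=0\Rightarrow E^A(\Delta)\leq 1-s(\tilde\omega_2)\leq 1-s(\tilde\omega_1)\Rightarrow\tilde\omega_1(E^A(\Delta))=0$ is correct. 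But this only yields absolute continuity of the two spectral distributions on $Sp(A)$, and that is far weaker than what membership in $\mathrm{M}(A;\omega_1\prec\omega_2)$ requires.

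The genuine gap is the sentence ``the absolute continuity $\mu_{\omega_{1,(A,\alpha(W))}}\ll\mu_{\omega_{2,(A,\alpha(W))}}$ then reduces to the corresponding statement for these spectral distributions.'' The central measures are measures on the state space $E_{\mathfrak{A}\otimes_{\mathrm{m}}C_0(A)}$, concentrated on graphs of the form $\{\omega_{i,\alpha,a}\otimes_{\mathrm{m}}\delta_a\,:\,a\in Sp(A)\}$, where $\omega_{i,\alpha,a}$ is the conditional post-measurement state of the system given the pointer value $a$ \emph{for the input state} $\omega_i$. Unless the disintegration over the center $\mathbb{C}1\otimes\mathcal{A}$ produces conditional states that are independent of the input (which happens only in very special situations, e.g.\ when the relevant spectral projections are minimal), the two central measures sit on essentially disjoint subsets of the state space, and no relation between their push-forwards to $Sp(A)$ can restore $\mu_{\omega_{1,(A,\alpha)}}\ll\mu_{\omega_{2,(A,\alpha)}}$. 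There is also a sanity check showing your conclusion proves too much: by the computation following Lemma~1, every $(A,\alpha)\in\mathrm{M}(\omega_1\prec\omega_2)$ automatically satisfies $D(\mu_{\omega_{1,(A,\alpha)}}\Vert\mu_{\omega_{2,(A,\alpha)}})=S(\omega_1\Vert\omega_2)$ \emph{exactly}. If, as you claim, any self-adjoint $A$ with non-trivial spectrum together with its ideal measurement were a witness, then every observable would realize the quantum relative entropy through its classical measured data, which contradicts the generally strict monotonicity of relative entropy under restriction to a subalgebra for non-commuting pairs of states. The conjecture is really asking whether some measurement is \emph{sufficient} for the pair $(\omega_1,\omega_2)$ in the sense of Hiai--Ohya--Tsukada, and that existence question is exactly the content your reduction assumes away; the non-factor issue you flag at the end is a secondary concern by comparison.
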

This is, of course, not trivial mathematically.
The strong condition $S(\omega_1\Vert\omega_2)<\infty$ should be regarded as 
a physically affirmative statement that $\omega_1$ and $\omega_2$ are so similar
that the former can be compared with the latter
by the quantum relative entropy $S(\omega_1\Vert\omega_2)$.
Thus it is expected that the quantum relative entropy corresponds to 
the measure-theoretical relative entropy using specific probability measures.
In the rest of the paper, we assume a condition such as $\mathrm{M}(\omega_1\prec\omega_2)\neq\emptyset$.

\section{Large Deviation Type Estimate}
In this section, two estimates based on large deviation principle are discussed.
One of them is Stein's lemma and the other Sanov's theorem.
These are regarded as the standard procedure to give the statistical meaning to the relative entropy.
We show here that in quantum systems the quantum relative entropy has the same meaning as
the measure-theoretical relative entropy has.

\subsection{Another Version of Quantum Stein's Lemma}
Assume that $\mathrm{M}(\omega_1\prec\omega_2)\neq\emptyset$, and
for a $(A,\alpha)\in \mathrm{M}(\omega_1\prec\omega_2)$
the Radon-Nikodym derivative $\dfrac{d\mu_{\omega_{1,(A,\alpha)}}}{d\mu_{\omega_{2,(A,\alpha)}}}$
is strictly positive.
We are now ready for hypothesis testing.
\begin{defi}
A (decision) test $\mathcal{T}$ is a sequence of measurable functions
$T_n:(\mathrm{supp}\;\mu_{\omega_{2,(A,\alpha)}})^n$ $\rightarrow$ $\{0,1\}$ with interpretation
$\{T_n=0\}=\{H_0\;\text{is}\;\text{accepted}\}$ and $\{T_n=1\}=\{H_1\;\text{is}\;\text{accepted}\}$.
If $\mathcal{T}=(T)$, we do not distinguish $\mathcal{T}$ and $T$.
\end{defi}
We define the error probabilities
\begin{eqnarray*}
\alpha_n^{(A,\alpha)}(T_n) &=&P_{\omega_1}^{(A,\alpha)}(T_n=1)=P_{\omega_1}^{(A,\alpha)}(H_0\;\text{is}\;\text{rejected}), \\
\beta_n^{(A,\alpha)}(T_n) &=&P_{\omega_2}^{(A,\alpha)}(T_n=0)=P_{\omega_2}^{(A,\alpha)}(H_1\;\text{is}\;\text{rejected}).
\end{eqnarray*}
of the first kind and of the second kind,
where $P_{\omega_i}^{(A,\alpha)}$ is the countable product probability measure of $\mu_{\omega_{i,(A,\alpha)}}$
for $i=1,2$.
Then, we define the log-likelihood ratio
\begin{equation*}
X_j=-\log \frac{d\mu_{\omega_{1,(A,\alpha)}}}{d\mu_{\omega_{2,(A,\alpha)}}}(\rho_j)
\end{equation*}
for $\tilde{\rho}=$, and the normalized log-likelihood ratio
\begin{equation*}
S_n=\frac{1}{n}\sum_{j=1}^n X_j.
\end{equation*}
A test $\mathcal{T}=(\tilde{T}_n:(\mathrm{supp}\;\mu_{\omega_{2,(A,\alpha)}})^n\rightarrow \{0,1\})$ defined by
\begin{equation*}
\tilde{T}_n(\rho^n)=
\left\{
\begin{array}{ll}
0, &\quad\left(S_n\leq \eta_n\right),  \\
1, &\quad(otherwise),
\end{array}
\right.
\end{equation*}
is called the Neymann-Pearson test, and denoted also by $\mathcal{T}^\eta=(T^{\eta_n}_n)$,
where $\eta=(\eta_n)$ is a real sequence.
\begin{lemm}\label{NP}
For any $0\leq \gamma\leq 1$, there exists a $\eta=(\eta_n)$ such that
$\alpha_n^{(A,\alpha)}(\tilde{T}_n^{\eta_n})\leq\gamma$,
and any other test $\mathcal{T}=(T_n)$ satisfying $\alpha_n^{(A,\alpha)}(T_n)\leq\gamma$
must satisfy $\beta_n^{(A,\alpha)}(T_n)\geq\beta_n^{(A,\alpha)}(\tilde{T}^{\eta_n}_n)$.
\end{lemm}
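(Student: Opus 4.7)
The plan is to recognize that, once the quantum structure has been encoded into the classical product measures $P_{\omega_1}^{(A,\alpha)}$ and $P_{\omega_2}^{(A,\alpha)}$ via the Hiai--Ohya--Tsukada reduction and the setup of Section 4.1, the statement reduces to a textbook Neyman--Pearson lemma for classical hypothesis testing with i.i.d.\ samples, log-likelihood ratio $-nS_n$, and deterministic $\{0,1\}$-valued tests. I would organize the argument into two parts: first the choice of the critical sequence $\eta_n$ together with the type-I error bound, and then the optimality statement via the standard likelihood-ratio argument.

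For the existence of $\eta_n$, set $G_n(t) = P_{\omega_1}^{(A,\alpha)}(S_n > t)$ and define $\eta_n = \inf\{t \in \mathbb{R} \mid G_n(t) \leq \gamma\}$. The strict positivity of $\frac{d\mu_{\omega_{1,(A,\alpha)}}}{d\mu_{\omega_{2,(A,\alpha)}}}$ assumed at the start of the subsection makes $S_n$ finite $P_{\omega_1}^{(A,\alpha)}$-a.s., so $G_n$ is a non-increasing, right-continuous function with limits $1$ at $-\infty$ and $0$ at $+\infty$; the infimum is therefore finite and attained, giving $\alpha_n^{(A,\alpha)}(\tilde{T}_n^{\eta_n}) = G_n(\eta_n) \leq \gamma$. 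For optimality, let $p_n^{(i)}$ denote the densities of the restrictions of $P_{\omega_i}^{(A,\alpha)}$ to the first $n$ coordinates with respect to a common dominating measure (take $P_{\omega_2}^{(A,\alpha)}$ itself, by the assumed absolute continuity). The crux is the pointwise inequality
\[
\bigl(\tilde{T}_n^{\eta_n} - T_n\bigr)\bigl(p_n^{(1)} - e^{-n\eta_n} p_n^{(2)}\bigr) \leq 0,
\]
which holds because on $\{S_n > \eta_n\}$ the first factor lies in $\{0,1\}$ and is non-negative while $p_n^{(1)} < e^{-n\eta_n} p_n^{(2)}$, and on $\{S_n \leq \eta_n\}$ both signs reverse. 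Integrating against the dominating measure yields
\[
\bigl(\alpha_n(\tilde{T}_n^{\eta_n}) - \alpha_n(T_n)\bigr) + e^{-n\eta_n}\bigl(\beta_n(\tilde{T}_n^{\eta_n}) - \beta_n(T_n)\bigr) \leq 0,
\]
so whenever $\alpha_n(T_n) \leq \alpha_n(\tilde{T}_n^{\eta_n})$ the desired inequality $\beta_n(T_n) \geq \beta_n(\tilde{T}_n^{\eta_n})$ is immediate.

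The main obstacle is the intermediate case in which $\alpha_n(T_n)$ lies strictly between $\alpha_n(\tilde{T}_n^{\eta_n})$ and $\gamma$, which can occur only if the law of $S_n$ under $P_{\omega_1}^{(A,\alpha)}$ has an atom at the quantile $\eta_n$. To close this gap I would either enlarge the class of admissible decisions by a randomization on $\{S_n = \eta_n\}$, arranging $\alpha_n(\tilde{T}_n^{\eta_n}) = \gamma$ exactly so that the pointwise inequality above directly forces $\beta_n(T_n) \geq \beta_n(\tilde{T}_n^{\eta_n})$ for every competitor, or argue that any deterministic $T_n$ with $\alpha_n(\tilde{T}_n^{\eta_n}) < \alpha_n(T_n) \leq \gamma$ is dominated in power by a Neyman--Pearson test at a slightly smaller threshold, reducing again to the already handled regime. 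No other step should present genuine difficulty, since the quantum content of the problem has been entirely absorbed into the definitions of $P_{\omega_i}^{(A,\alpha)}$.
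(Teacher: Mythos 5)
The paper gives no proof of this lemma at all --- it simply points to Cover--Thomas and Lehmann--Romano --- so the only comparison available is with the textbook Neyman--Pearson argument, which is what you reconstruct. Your reduction to the classical i.i.d.\ setting, the choice of $\eta_n$ as a quantile of $S_n$ under $P_{\omega_1}^{(A,\alpha)}$, and the pointwise inequality $(\tilde T^{\eta_n}_n - T_n)(p^{(1)}_n - e^{-n\eta_n}p^{(2)}_n)\le 0$ together with its integrated form are all correct. This much establishes optimality of $\tilde T^{\eta_n}_n$ among all tests $T_n$ with $\alpha^{(A,\alpha)}_n(T_n)\le\alpha^{(A,\alpha)}_n(\tilde T^{\eta_n}_n)$, which is exactly the form of the lemma stated in the sources the paper cites.

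The genuine gap is the ``intermediate case'' you flag yourself, and neither of your two remedies closes it. Randomizing on $\{S_n=\eta_n\}$ produces a test outside the class fixed by Definition 3 (measurable maps into $\{0,1\}$), and the randomized test's type-II error is \emph{smaller} than that of the deterministic $\tilde T^{\eta_n}_n$, so proving that every competitor has $\beta_n$ at least that of the randomized test does not yield $\beta_n(T_n)\ge\beta_n(\tilde T^{\eta_n}_n)$. Likewise, lowering the threshold to some $\eta'<\eta_n$ with $\alpha_n(\tilde T^{\eta'}_n)\ge\alpha_n(T_n)$ gives $\beta_n(T_n)\ge\beta_n(\tilde T^{\eta'}_n)$, but shrinking the acceptance region $\{S_n\le\eta\}$ only decreases $\beta_n$, so the comparison points the wrong way. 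In fact the statement as literally written is false when the law of $S_n$ under $P_{\omega_1}^{(A,\alpha)}$ has atoms: take $n=1$, a three-point support with $\mu_{\omega_{1,(A,\alpha)}}=(0.3,0.3,0.4)$, $\mu_{\omega_{2,(A,\alpha)}}=(0.1,0.1,0.8)$ and $\gamma=0.35$; the only thresholds with $\alpha_1\le\gamma$ give the trivial always-accept test with $\beta_1=1$, while the deterministic test rejecting only the first point has $\alpha_1=0.3\le\gamma$ and $\beta_1=0.9$. So the obstacle you identify is not a technicality to be patched but a defect of the statement itself: what your main inequality actually proves (and what the cited references assert) is optimality at the exact achieved level $\gamma=\alpha_n(\tilde T^{\eta_n}_n)$, and the version for arbitrary $\gamma$ holds only under an atomlessness hypothesis on $S_n$ or after enlarging the class of tests to randomized ones. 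Either add such a hypothesis or restate the lemma in the exact-level form; as it stands, the proposal does not (and cannot) prove the lemma as written.
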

See \cite{CT91,LR05} for proof.
It is well known that both of the error probabilities
cannot tend to zero simultaneously. Thus we consider the next quantity.
\begin{equation*}
\beta_n^{(A,\alpha)}(\varepsilon)=\inf
\{\beta_n^{(A,\alpha)}(T_n)\;|\;T_n\;:\;\text{test},\;\alpha_n^{(A,\alpha)}(T_n)<\varepsilon\}
\end{equation*}
This is the infimum of $\beta_n^{(A,\alpha)}(T_n)$ among all tests $T_n$
with $\alpha_n^{(A,\alpha)}(T_n)<\varepsilon$.
It is seen by Lemma \ref{NP} that a decision test attaining the infimum is the Neymann-Pearson test
$\mathcal{T}^\eta=(T^{\eta_n}_n)$, where $\eta=(\eta_n)$ is determined by $\epsilon$.
The following theorem is then proved, which is another version of quantum Stein's lemma.
\begin{theo}
For any $\varepsilon<1$,
\begin{equation}
\lim_{n\rightarrow \infty}\frac{1}{n}\log\beta_{n}^{(A,\alpha)}(\varepsilon)=-S(\omega_1\Vert\omega_2).
\end{equation}
\end{theo}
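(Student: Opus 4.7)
The plan is to reduce the theorem to the classical Stein's lemma for the i.i.d.\ sequences drawn from $\mu_{\omega_{1,(A,\alpha)}}$ and $\mu_{\omega_{2,(A,\alpha)}}$. The essential quantum input that makes this reduction possible is the identity
\[
D\bigl(\mu_{\omega_{1,(A,\alpha)}}\,\Vert\,\mu_{\omega_{2,(A,\alpha)}}\bigr) \;=\; S(\omega_1\Vert\omega_2),
\]
established in the preceding subsection via Hiai--Ohya--Tsukada (Theorem 4) and Lemma 1. Once this identity is accepted, the remainder is asymptotic analysis of the normalized log-likelihood ratio $S_n$ under the product measures $P_{\omega_1}^{(A,\alpha)}$ and $P_{\omega_2}^{(A,\alpha)}$, using the weak law of large numbers (WLLN) and a standard change-of-measure bound.

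For the upper bound $\limsup_n \tfrac{1}{n}\log\beta_n^{(A,\alpha)}(\varepsilon)\leq -S(\omega_1\Vert\omega_2)$, I fix $\delta>0$ and consider the Neymann--Pearson test with constant threshold $\eta_n=-S(\omega_1\Vert\omega_2)+\delta$. Finiteness of $D$, together with strict positivity of the Radon--Nikodym derivative, ensures $\mathbb{E}_{\mu_{\omega_{1,(A,\alpha)}}}[|X_1|]<\infty$ with mean $-S(\omega_1\Vert\omega_2)$, so by WLLN $S_n\to -S(\omega_1\Vert\omega_2)$ in $P_{\omega_1}^{(A,\alpha)}$-probability and $\alpha_n^{(A,\alpha)}(\tilde{T}_n^{\eta_n})\to 0$; in particular this test is admissible ($\alpha_n<\varepsilon$) for all large $n$. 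A change of measure using $dP_{\omega_2}^{(A,\alpha)}/dP_{\omega_1}^{(A,\alpha)}=e^{nS_n}$ on the $n$-fold marginals gives
\[
\beta_n^{(A,\alpha)}(\tilde{T}_n^{\eta_n}) \;=\; \int_{\{S_n\leq\eta_n\}} e^{nS_n}\,dP_{\omega_1}^{(A,\alpha)} \;\leq\; e^{n\eta_n}.
\]
Since $\beta_n^{(A,\alpha)}(\varepsilon)$ is the infimum over admissible tests, $\tfrac{1}{n}\log\beta_n^{(A,\alpha)}(\varepsilon)\leq -S(\omega_1\Vert\omega_2)+\delta$, and $\delta\downarrow 0$ closes this half.

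For the lower bound, I use the WLLN-typical set $B_n=\{S_n\geq -S(\omega_1\Vert\omega_2)-\delta\}$, which has $P_{\omega_1}^{(A,\alpha)}$-measure tending to $1$. On $B_n$ the bound $dP_{\omega_1}^{(A,\alpha)}/dP_{\omega_2}^{(A,\alpha)}=e^{-nS_n}\leq e^{n(S(\omega_1\Vert\omega_2)+\delta)}$ holds, so for any test $T_n$ with $\alpha_n^{(A,\alpha)}(T_n)<\varepsilon$,
\[
\beta_n^{(A,\alpha)}(T_n) \;\geq\; P_{\omega_2}^{(A,\alpha)}\bigl(\{T_n=0\}\cap B_n\bigr) \;\geq\; e^{-n(S(\omega_1\Vert\omega_2)+\delta)}\,\bigl(P_{\omega_1}^{(A,\alpha)}(B_n)-\varepsilon\bigr),
\]
using $P_{\omega_1}^{(A,\alpha)}(\{T_n=0\}\cap B_n)\geq(1-\varepsilon)+P_{\omega_1}^{(A,\alpha)}(B_n)-1$. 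Since $\varepsilon<1$, the parenthesised factor stays bounded below by a positive constant for all large $n$, so $\liminf_n \tfrac{1}{n}\log\beta_n^{(A,\alpha)}(\varepsilon)\geq -S(\omega_1\Vert\omega_2)-\delta$; let $\delta\to 0$.

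The conceptual hurdle has already been overcome in Section 3: the identification of $D$ with $S$ via Hiai--Ohya--Tsukada together with the measurement framework $(A,\alpha)\in\mathrm{M}(\omega_1\prec\omega_2)$ converts a genuinely quantum asymptotic problem into a textbook classical one. Lemma \ref{NP} is invoked only implicitly, to guarantee that the Neymann--Pearson test realises the optimum; the asymptotics themselves follow entirely from WLLN and the change-of-measure inequality above.
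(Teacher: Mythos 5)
Your proof is correct and takes essentially the same route as the paper: the quantum content is exhausted by the identity $D(\mu_{\omega_{1,(A,\alpha)}}\Vert\mu_{\omega_{2,(A,\alpha)}})=S(\omega_1\Vert\omega_2)$ obtained from Hiai--Ohya--Tsukada and Lemma 1 in Section 3, after which the statement reduces to the classical Stein's lemma for the i.i.d.\ sequences. The only difference is that the paper simply cites \cite[Lemma 3.4.7 and Exercise 3.4.17]{DZ02} for the classical asymptotics, whereas you reprove that classical result in full via the weak law of large numbers and the change-of-measure bounds.
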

\begin{proof}
By \cite[Lemma 3.4.7 and Exercise 3.4.17]{DZ02},
\begin{equation*}
\lim_{n\rightarrow \infty}\frac{1}{n}\log\beta_{n}^{(A,\alpha)}(\varepsilon)=
-D(\mu_{\omega_{1,(A,\alpha)}}\Vert\mu_{\omega_{2,(A,\alpha)}}).
\end{equation*}
Then,
$D(\mu_{\omega_{1,(A,\alpha)}}\Vert\mu_{\omega_{2,(A,\alpha)}})=S(\omega_1\Vert\omega_2)$.

\end{proof}
This theorem holds for all C$^\ast$-algebras, that is,
for all quantum systems, if two states $\varphi,\psi$ are comparable each other, i.e.,
$\mathrm{M}(\varphi\prec\psi)\neq\emptyset$ or $\mathrm{M}(\psi\prec\varphi)\neq\emptyset$.
Therefore, we expect that the asymptotic theory of classical hypothesis testing is applicable to the quantum case.

\subsection{A Quantum Version of Sanov's Theorem}
Let $\mathfrak{A}$ be a C$^\ast$-algebra, $\omega$ be a state on $\mathfrak{A}$,
$(\pi_\omega,\mathcal{H}_\omega,\Omega_\omega)$ be a GNS representation of $\mathfrak{A}$
with respect to $\omega$. We use $\mathfrak{A}\otimes_{\mathrm{m}}C_0(A)$, where
$A=A^\ast\in\pi_\omega(\mathfrak{A})^{\prime\prime}$, as the algebra of the composite system
of the system and an apparatus measuring $A$, and
$\alpha\in Aut(\pi_\omega(\mathfrak{A})^{\prime\prime}\otimes \mathcal{A})$,
$\mathcal{A}=C_0(A)^{\prime\prime}$, as a measuring interaction.

We denote by $\mathcal{B}^{w}(M_{1}(E_{\mathfrak{A}\otimes_{\mathrm{m}} C_0(A)}))$
the Borel $\sigma$-field on $M_{1}(E_{\mathfrak{A}\otimes_{\mathrm{m}} C_0(A)})$ generated by
the weak topology \footnote{See \cite{Cs06} and \cite[p.261]{DZ02}.}. 
For any $\tilde{\rho}=(\rho_1,\rho_2,\cdots)\in(\mathrm{supp}\;\mu_{\omega_{(A,\alpha)}})^{\mathbb{N}}$ and
$\Delta\in\mathcal{B}(\mathrm{supp}\;\mu_{\omega_{(A,\alpha)}})$, we define random variables
$Y_j(\tilde{\rho})=\rho_j$ for $j=1,2,\cdots$, the empirical measures
\begin{equation}
L_{n}^{(A,\alpha)}(\tilde{\rho},\Delta)=\frac{1}{n}\sum_{j=1}^{n}\delta_{Y_{j}(\tilde{\rho})}(\Delta),
\end{equation}
and
\begin{equation}
Q^{(A,\alpha)}_{n}(\Gamma)=P_{\omega}^{(A,\alpha)}(L_{n}^{(A,\alpha)}\in\Gamma),
\end{equation}
for $\Gamma\in\mathcal{B}^{w}(M_{1}(E_{\mathfrak{A}\otimes_{\mathrm{m}} C_0(A)}))$
such that $\{L_{n}^{(A,\alpha)}\in\Gamma\}$ is measurable, where $P_{\omega}^{(A,\alpha)}$
denotes the countable product measure of $\mu_{\omega_{(A,\alpha)}}$.
Then we state the main theorem in this subsection,
the earlier version of which has appeared in \cite{OO11}.
\begin{theo}
$Q^{(A,\alpha)}_n$ satisfies the large deviation principle
with the good rate function $S(b(\cdot)\Vert \omega_{(A,\alpha)})$:
\begin{equation}
-S(b(\Gamma^{o})\Vert\omega_{(A,\alpha)})\leq \liminf_{n\rightarrow\infty}\frac{1}{n}\log Q_{n}^{(A,\alpha)}(\Gamma)
\leq \limsup_{n\rightarrow\infty}\frac{1}{n}\log Q_{n}^{(A,\alpha)}(\Gamma)\leq
-S(b(\overline{\Gamma})\Vert\omega_{(A,\alpha)})
\end{equation}
for $\Gamma\in\mathcal{B}^{w}(M_{1}(E_{\mathfrak{A}\otimes_{\mathrm{m}} C_0(A)}))$
such that $\{L_{n}^{(A,\alpha)}\in\Gamma\}$ is measurable, where
\begin{equation*}
S(b(\Gamma)\Vert\omega_{(A,\alpha)}):=
\inf\{S(b(\nu)\Vert\omega_{(A,\alpha)})|\nu\in\Gamma,\nu \ll \mu_{\omega_{(A,\alpha)}}\}.
\end{equation*}
In the case that, for $\Gamma\in\mathcal{B}^{w}(M_{1}(E_{\mathfrak{A}\otimes_{\mathrm{m}} C_0(A)}))$,
$\{\nu\in\Gamma^{o}|\nu \ll \mu_{\omega_{(A,\alpha)}}\}$
and $\{\nu\in\overline{\Gamma}|\nu \ll \mu_{\omega_{(A,\alpha)}}\}$
are empty, $S(b(\Gamma^{o})\Vert\omega_{(A,\alpha)})$ and
$S(b(\overline{\Gamma})\Vert\omega_{(A,\alpha)})$ are
defined as infinity, respectively.
\end{theo}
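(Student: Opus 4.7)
The plan is to deduce this quantum Sanov theorem from classical Sanov applied to the i.i.d.\ sequence $(Y_j)_{j\geq 1}$ with common law $\mu_{\omega_{(A,\alpha)}}$, and then to rewrite the resulting classical rate function using Hiai–Ohya–Tsukada (Theorem \ref{HOT}).

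First, I would invoke classical Sanov's theorem on $M_1(\mathrm{supp}\,\mu_{\omega_{(A,\alpha)}})$ equipped with the weak topology. Under the product measure $P_\omega^{(A,\alpha)}$ the $Y_j$ are i.i.d.\ with law $\mu_{\omega_{(A,\alpha)}}$, so the version of Sanov in \cite{DZ02} applies directly to the empirical measures $L_n^{(A,\alpha)}$ and yields the LDP for $Q_n^{(A,\alpha)}$ with good rate function $\nu\mapsto D(\nu\Vert\mu_{\omega_{(A,\alpha)}})$, where the latter equals $+\infty$ whenever $\nu \not\ll \mu_{\omega_{(A,\alpha)}}$. Embedding $M_1(\mathrm{supp}\,\mu_{\omega_{(A,\alpha)}})$ into $M_1(E_{\mathfrak{A}\otimes_{\mathrm{m}}C_0(A)})$ in the obvious way poses no difficulty, since $L_n^{(A,\alpha)}$ is almost surely supported on $\mathrm{supp}\,\mu_{\omega_{(A,\alpha)}}$ and the measurability of $\{L_n^{(A,\alpha)}\in\Gamma\}$ is built into the hypothesis on $\Gamma$.

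Second, I would identify the rate function. For each $\nu \in M_1(E_{\mathfrak{A}\otimes_{\mathrm{m}}C_0(A)})$ with $\nu \ll \mu_{\omega_{(A,\alpha)}}$, both $\nu$ (a barycentric measure of the state $b(\nu)$) and $\mu_{\omega_{(A,\alpha)}}$ (a barycentric measure of $\omega_{(A,\alpha)}$, which is precisely the subcentral measure constructed in Section 2.2 from the abelian von Neumann subalgebra $\mathbb{C}1\otimes\mathcal{A}$ of the centre) are dominated by the subcentral measure $m := \mu_{\omega_{(A,\alpha)}}$. Theorem \ref{HOT} then yields
\begin{equation*}
S(b(\nu)\Vert\omega_{(A,\alpha)}) \;=\; D(\nu\Vert\mu_{\omega_{(A,\alpha)}}).
\end{equation*}
Taking the infimum over $\nu\in\Gamma^{o}$ (resp.\ $\nu\in\overline{\Gamma}$) and noting that the classical rate function is automatically $+\infty$ outside $\{\nu \ll \mu_{\omega_{(A,\alpha)}}\}$, I recover exactly the infima appearing in the statement. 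The goodness of the rate function (compact level sets, hence lower semicontinuity) is inherited from the classical one through this pointwise identification, and the edge cases in which no $\nu$ in $\Gamma^o$ or $\overline{\Gamma}$ is absolutely continuous are handled by the stated convention of setting the infimum to $+\infty$.

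The main obstacle I expect is precisely the uniform-in-$\nu$ application of Hiai–Ohya–Tsukada: one must certify that a single subcentral measure $m$ can be used to dominate the pair $(\nu, \mu_{\omega_{(A,\alpha)}})$ for every $\nu$ entering the infimum. The present setup makes this essentially automatic, because $\mu_{\omega_{(A,\alpha)}}$ arises in Section 2.2 as the Born-rule (sub)central measure attached to $\mathbb{C}1\otimes\mathcal{A}$, and absolute continuity $\nu \ll \mu_{\omega_{(A,\alpha)}}$ is already required in the defining infimum. The other point requiring care, though it is minor, is the topological/measurable compatibility between the weak-topology Borel field $\mathcal{B}^{w}$ and the support restriction needed for classical Sanov; this is standard and follows from the Polish character of $M_1(\mathrm{supp}\,\mu_{\omega_{(A,\alpha)}})$ in the weak topology.
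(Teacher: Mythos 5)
Your proposal is correct and takes essentially the same route as the paper, whose entire proof is the one-line observation that the result follows from classical Sanov's theorem (the paper cites Csisz\'ar's version rather than \cite{DZ02}) combined with Theorem \ref{HOT} to identify $D(\nu\Vert\mu_{\omega_{(A,\alpha)}})$ with $S(b(\nu)\Vert\omega_{(A,\alpha)})$. Your elaboration of the domination by the subcentral measure $m=\mu_{\omega_{(A,\alpha)}}$ and of the measurability caveats simply fills in details the paper leaves implicit.
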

\begin{proof}
This theorem is easily proved by \cite[Sanov's Theorem]{Cs06} and Theorem \ref{HOT}.
\end{proof}
If $\mathfrak{A}\otimes_{\mathrm{m}} C_0(A)$ is separable,
then $E_{\mathfrak{A}\otimes_{\mathrm{m}} C_0(A)}$ becomes a compact metric space whose metric is defined by
\begin{equation*}
d(\rho_1,\rho_2)=\sum_{n=1}^\infty \frac{1}{2^n}\frac{|\rho_1(A_j)-\rho_2(A_j)|}{\Vert A_j\Vert},
\end{equation*}
where the set $\{A_j\neq 0|j\in\mathbb{N}\}$ is a dense subset of $\mathfrak{A}\otimes_{\mathrm{m}} C_0(A)$.
In this case, for all $\Gamma\in\mathcal{B}^{w}(M_{1}(E_{\mathfrak{A}\otimes_{\mathrm{m}} C_0(A)}))$,
$\{L_{n}^{(A,\alpha)}\in\Gamma\}$ is measurable. It is known that
${B}^{w}(M_{1}(E_{\mathfrak{A}\otimes_{\mathrm{m}} C_0(A)}))$ is equal to
${B}^{cy}(M_{1}(E_{\mathfrak{A}\otimes_{\mathrm{m}} C_0(A)}))$,
if $\mathfrak{A}\otimes_{\mathrm{m}} C_0(A)$ is separable,
where ${B}^{cy}(M_{1}(E_{\mathfrak{A}\otimes_{\mathrm{m}} C_0(A)}))$ is defined as follows.
We denote by $B(E_{\mathfrak{A}\otimes_{\mathrm{m}} C_0(A)})$ the vector space of all bounded
Borel measurable functions on $E_{\mathfrak{A}\otimes_{\mathrm{m}} C_0(A)}$.
For $\phi\in B(E_{\mathfrak{A}\otimes_{\mathrm{m}} C_0(A)})$,
let $\tau_{\phi}:M_{1}(E_{\mathfrak{A}\otimes_{\mathrm{m}} C_0(A)})\rightarrow\mathbb{R}$
be defined by $\tau_{\phi}(\nu)=\langle\phi,\nu\rangle=\int \phi\;d\nu$.
We denote by
$\mathcal{B}^{cy}(M_{1}(E_{\mathfrak{A}\otimes_{\mathrm{m}} C_0(A)}))$ the $\sigma$-field of cylinder
sets on $M_{1}(E_{\mathfrak{A}\otimes_{\mathrm{m}} C_0(A)})$, i.e., the smallest $\sigma$-field that makes all
$\{\tau_{\phi}\}$ measurable \footnote{See \cite[p.263]{DZ02}}.

When one of the measures $\mu$ attaining the infimum of upper or lower bound is an element of
$\Omega^{(A,\alpha)}=\{\nu\in E_{\mathfrak{A}\otimes_{\mathrm{m}} C_0(A)}|b(\nu)=\psi_{(A,\alpha)},
\psi\in E_{\mathfrak{A}},\psi : \pi_\omega\text{-normal}\}$,
there exists at least one state $\psi$
such that $S(b(\nu)\Vert \omega_{(A,\alpha)})=S(\psi\Vert\omega)$.
We can therefore reach the state of the system by the use of measured data.
\begin{rema}
Optimization of measurement is not considered in this subsection.
The setting of Sanov's theorem requires only the knowledge of the methods for collecting and accumulating data.
As a result, we should use a measurement $(A,\alpha)$ which is analyzed in detail in a quantum system
under consideration.
\end{rema}

\section{Quantum Model Selection}
First, we define a (parametric) model of states.
\begin{defi}
A family of states $\{\omega_{\theta}|\theta\in\Theta\}$ on $\mathfrak{A}$ is called a (statistical) model
if it satisfies the following conditions:\\
(i) $\Theta$ is a compact subset of $\mathbb{R}^d$ $(d\in\mathbb{N})$:\\
(ii) there exist $\theta_1,\cdots,\theta_g\in\Theta$ such that, for every $\theta\in\Theta$, $\omega_{\theta}$ is
$\pi$-normal, where $\pi=\pi_{\omega_{\theta_1}+\cdots+\omega_{\theta_g}}$.
\end{defi}
Assume that each model in this section is factor for simplicity.
A predictive state is a quantum version of predictive (probability) distribution and is a function,
which is constructed of a model of states, from data into states. In this section,
our purpose is to define a quantum version of information criteria in order to choose the best predictive state
from many models. We then define the following measurement.
\begin{defi}
For $N\in\mathbb{N}$, $N$ different models
$\{\omega_{1,\theta_1}|\theta_1\in\Theta_1\},\cdots, \{\omega_{N,\theta_N}|\theta_N\in\Theta_N\}$
are comparable if they satisfy the following conditions:\\
(i) there exists $\varphi\in E_{\mathfrak{A}}$ such that
$\mathrm{M}(\varphi\prec\{\omega_{j,\theta_j}\}_{j=1,\cdots,N})\neq\emptyset$;\\
(ii) For a $(A,\alpha)\in \mathrm{M}(\varphi\prec\{\omega_{j,\theta_j}\}_{j=1,\cdots,N})$,\\
(ii-1) there exists a subcentral measure $m$ on $E_{\mathfrak{A}\otimes_{\mathrm{m}}C_0(A)}$ such that
$\mu_{j,\theta_j}^{(A,\alpha)}\ll m$ $(j=1,2,\cdots,N)$, where $\mu_{j,\theta_j}^{(A,\alpha)}$ is the
central measure of $\omega_{j,\theta_j,(A,\alpha)}$ $(j=1,2,\cdots,N)$;\\
(ii-2) the support of $\mu_{j,\theta_j}^{(A,\alpha)}$ is independent of $\theta_j\in\Theta_j$.\\
The triplet $\mathfrak{P}=(\varphi,(A,\alpha),m)$ is called a predictive measurement for
$\{\omega_{1,\theta_1}|\theta_1\in\Theta_1\}$, $\cdots$, $\{\omega_{N,\theta_N}|\theta_N\in\Theta_N\}$.
\end{defi}
It is easily seen that
$\{\omega_{1,\theta_1}|\theta_1\in\Theta_1\}$, $\{\omega_{2,\theta_2}|\theta_2\in\Theta_2\}$
and $\{\omega_{3,\theta_3}|\theta_3\in\Theta_3\}$ are not necessarily comparable,
even if $\{\omega_{1,\theta_1}|\theta_1\in\Theta_1\}$ and $\{\omega_{2,\theta_2}|\theta_2\in\Theta_2\}$
are comparable and so are $\{\omega_{2,\theta_2}|\theta_2\in\Theta_2\}$
and $\{\omega_{3,\theta_3}|\theta_3\in\Theta_3\}$.

Let $\{\omega_{1,\theta_1}|\theta_1\in\Theta_1\},\cdots, \{\omega_{N,\theta_N}|\theta_N\in\Theta_N\}$ be
predictively comparable and $\mathfrak{P}=(\varphi,(A,\alpha),m)$ a predictive measurement for them.
For $j=1,\cdots, N$,
\begin{eqnarray*}
\omega_{j,\theta_j,(A,\alpha)}
&=&\alpha^\ast(\tilde{\omega}_{j,\theta_j}\otimes m_{\mathcal{A}})\circ (\pi_\omega\otimes id) \\
 &=:& \int\rho\;d\mu_{j,\theta_j}^{(A,\alpha)}(\rho)\\
 &=&\int\rho\;\frac{d\mu_{j,\theta_j}^{(A,\alpha)}}{dm}(\rho)\;dm(\rho),
\end{eqnarray*}
and put $\displaystyle{p^{(A,\alpha)}_{j,\theta_j}(\rho)=\frac{d\mu_{j,\theta_j}^{(A,\alpha)}}{dm}(\rho)}$.
We denote by $\rho^n=\{\rho_1,\rho_2,\cdots,\rho_n\}$ the set of states consisting of $n$ elements.
Then we define a maximal likelihood predictive state
$\omega_{\hat{\theta}_j(\rho^n)}$, where
$\displaystyle{\hat{\theta}_j(\rho^n)=\arg\!\max_{\theta_j\in\Theta_j}
\prod_{i=1}^n p^{(A,\alpha)}_{j,\theta_j}(\rho_i)}$ for $\rho^n=\{\rho_1,\rho_2,\cdots,\rho_n\}$,
and a Bayesian Escort predictive state
\begin{equation*}
\omega_{j,\pi_j,\beta}^{\rho^n}(B)=
\frac{\int \omega_{j,\theta_j}(B) \prod_{i=1}^n p^{(A,\alpha)}_{j,\theta_j}(\rho_i)^\beta\pi_j(\theta_j)d\theta_j}
{\int \prod_{i=1}^n p^{(A,\alpha)}_{j,\theta_j}(\rho_i)^\beta\pi_j(\theta_j)d\theta_j},
\end{equation*}
for $j=1,2,\cdots,N$, where $\pi_j(\theta_j)$ is a probability density on $\Theta_j$ and $\beta>0$.
It then holds that
\begin{eqnarray*}
\omega_{\hat{\theta}_j(\rho^n),(A,\alpha)} &=&
\int \rho\;p^{(A,\alpha)}_{j,\hat{\theta}_j(\rho^n)}(\rho)\;dm(\rho),\\
\omega_{j,\pi_j,\beta,(A,\alpha)}^{\rho^n} &=&\int\rho\;p^{\rho^n,(A,\alpha)}_{j,\pi_j,\beta}(\rho)\;dm(\rho),
\end{eqnarray*}
where
\begin{equation*}
p^{\rho^n,(A,\alpha)}_{j,\pi_j,\beta}(\rho)=
\frac{\int p^{(A,\alpha)}_{j,\theta_j}(\rho) \prod_{i=1}^n p^{(A,\alpha)}_{j,\theta_j}(\rho_i)^\beta\pi_j(\theta_j)d\theta_j}
{\int \prod_{i=1}^n p^{(A,\alpha)}_{j,\theta_j}(\rho_i)^\beta\pi_j(\theta_j)d\theta_j}.
\end{equation*}
Futhermore, we can show the following equalities by Theorem \ref{HOT}:
\begin{eqnarray*}
S(\varphi\Vert\omega_{\hat{\theta}_j(\rho^n)}) 
&=&\int q(\rho)\log \frac{q(\rho)}{p^{(A,\alpha)}_{j,\hat{\theta}_j(\rho^n)}(\rho)}\;dm(\rho)
=D(q\Vert p^{(A,\alpha)}_{j,\hat{\theta}_j(\rho^n)}),\\
S(\varphi\Vert\omega_{j,\pi_j,\beta}^{\rho^n}) 
&=&\int q(\rho)\log\frac{q(\rho)}{p^{\rho^n,(A,\alpha)}_{j,\pi_j,\beta}(\rho)}\;dm(\rho)
=D(q\Vert p^{\rho^n,(A,\alpha)}_{j,\pi_j,\beta}),
\end{eqnarray*}
where $q(\rho)=\dfrac{d\mu_{\varphi_{(A,\alpha)}}}{dm}(\rho)$.
Thus we can define information criteria for quantum states, owing to
discussion on the formulation of that for probability distributions \cite{Ak74,W1,W2,W3}.
Information criteria for quantum states, such as
\begin{eqnarray*}
\mathrm{AIC}_j&=&-\frac{1}{n}\sum_{i=1}^n\log p^{(A,\alpha)}_{j,\hat{\theta}_j(\rho^n)}(\rho_i)+\frac{d_j}{n},\\
 \mathrm{WAIC}_j&=&
 -\frac{1}{n}\sum_{i=1}^n\log p^{\rho^n,(A,\alpha)}_{j,\pi_j,\beta}(\rho_i)+\frac{\beta}{n}
 \mathcal{V}^{(A,\alpha)}_j,
\end{eqnarray*}
are defined for $j=1,\cdots,N$, where $d_j=\dim \Theta_j$,
and $0<\beta<\infty$, the functional variance $\mathcal{V}^{(A,\alpha)}_j$ defined by
\begin{equation}\label{FV}
\mathcal{V}^{(A,\alpha)}_j=\sum_{i=1}^{n}\left\{  \langle(\log p_{j,\theta_j}^{(A,\alpha)}(\rho_{i}))^{2}
\rangle_{\pi_j,\beta}^{\rho^{n}}-(\langle\log p_{j,\theta_j}^{(A,\alpha)}(\rho_{i})\rangle
_{\pi_j,\beta}^{\rho^{n}})^{2}\right\}.
\end{equation}
The a posteriori mean $\langle\cdot\rangle_{\pi_j,\beta}^{\rho^{n}}$ in Eq.(\ref{FV}) is defined by
\begin{equation}
\langle G(\theta_j)\rangle_{\pi_j,\beta}^{\rho^{n}}=\frac{\int
G(\theta_i)\prod_{i=1}^{n}p_{j,\theta_j}^{(A,\alpha)}(\rho_{i})^{\beta}\pi_j(\theta_j)d\theta_j
}{\int\prod_{i=1}^{n}p_{j,\theta_j}^{(A,\alpha)}(\rho_{i})^{\beta}\pi_j
(\theta_j)d\theta_j},
\end{equation}
for a given function $G(\theta_j)$ on $\Theta_j$.
We can choose the best model from
$\{\omega_{1,\theta_1}|\theta_1\in\Theta_1\},\cdots, \{\omega_{N,\theta_N}|\theta_N\in\Theta_N\}$
by information criteria in the setting of a predictive measurement for them.
Furthermore, it is obvious by the above discussion that the accuracy of the estimation
for quantum states are the same as that for probability measures in the classical case.
Thus we have demonstrated the validity of the use of information criteria for models of quantum states.
On the other hand, we can conclude that what makes applications of information criteria
for quantum states difficult is whether measurements which can compare all models of states
under consideration exist or not.

\section{On the Quantum $\alpha$-Divergence}
We discuss the quantum $\alpha$-divergence and its application here.
\subsection{The $\alpha$-version of Hiai-Ohya-Tsukada Theorem}
We define the $\alpha$-divergence
\begin{equation*}
D^{(\alpha)}(\mu\Vert\nu)=
\left\{
\begin{array}{ll}
D(\nu\Vert\mu), \;\;  (\alpha=+1), &\\
\!\!\!\dfrac{4}{1-\alpha^2}\!\left(\displaystyle{1-\!\!\int dm \left(\frac{d\mu}{dm}\right)^{\frac{1-\alpha}{2}}
\!\!\left(\frac{d\nu}{dm}\right)^{\frac{1+\alpha}{2}}}\right)\!, & \\
\hspace{14mm}\;\; (|\alpha|<1), & \\
D(\mu\Vert\nu), \;\; (\alpha=-1), &
\end{array}
\right. 
\end{equation*}
for probability measures $\mu,\nu$ on a measurable space $(\Omega,\mathcal{F})$
which are absolutely continuous with respect to a measure $m$ on $(\Omega,\mathcal{F})$,
Uhlmann's $\alpha$-divergence
\begin{equation*}
S^{(\alpha)}(\varphi\Vert\psi)_{\mathrm{Uhlmann}}=
\left\{
\begin{array}{ll}
S(\psi\Vert\varphi)_{\mathrm{Uhlmann}},\;\;(\alpha=+1), & \\
\!\!\!\dfrac{4}{1-\alpha^2}(1-QF_{\frac{1+\alpha}{2}}(\varphi^R,\psi^L)(1,1)), & \\
\hspace{27mm}(|\alpha|<1), & \\
S(\varphi\Vert\psi)_{\mathrm{Uhlmann}},\;\;(\alpha=-1), &\
\end{array}
\right.
\end{equation*}
for states $\varphi,\psi$ on a $\ast$-algebra, and Araki's $\alpha$-divergence
\begin{equation*}
S^{(\alpha)}(\varphi\Vert\psi)_{\mathrm{Araki}}=
\left\{
\begin{array}{ll}
S(\psi\Vert\varphi)_{\mathrm{Araki}},\;\;(\alpha=+1), & \\
\dfrac{4}{1-\alpha^2}
(1-\langle \Psi,\Delta_{\Phi,\Psi}^{1-\frac{1+\alpha}{2}}\Psi\rangle), & \\
\hspace{22mm}(|\alpha|<1), & \\
S(\varphi\Vert\psi)_{\mathrm{Araki}},\;\;(\alpha=-1), &\
\end{array}
\right.
\end{equation*}
for normal states $\varphi=\langle \Phi, \cdot\;\Phi\rangle,\psi=\langle \Psi, \cdot\;\Psi\rangle$
on a $\sigma$-finite von Neumann algebra $\mathcal{M}$, where $\Phi,\Psi$ are elements of a self-dual cone of a
standard form. It is easily checked that
\begin{equation*}
S^{(\alpha)}(\mu\Vert\nu)_{\mathrm{Araki}}=D^{(\alpha)}(\mu\Vert\nu),
\end{equation*}
for $\mu.\nu\ll m$ on a von Neumann algebra $L^\infty(\Omega,m)$.

The following three theorems are the $\alpha$-analogue of theorems in subsection 3.1.
Proofs are given in Appendix A.
\begin{theo}
For any positive linear functionals $\varphi,\psi$ on a von Neumann algebra $\mathcal{M}$,
$S^{(\alpha)}(\varphi\Vert\psi)_{\mathrm{Uhlmann}}=S^{(\alpha)}(\varphi\Vert\psi)_{\mathrm{Araki}}$.
\end{theo}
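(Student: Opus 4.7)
The strategy is to extend the Uhlmann--Araki identity (Theorem 2) from the derivative-at-zero form to an identity at the interior point $t = (1+\alpha)/2$. The endpoints $\alpha = \pm 1$ reduce literally to Theorem 2 by the definitions of the $\alpha$-divergences, so I focus on $|\alpha|<1$ and set $t := (1+\alpha)/2 \in (0,1)$. The task then collapses to proving the pointwise identity
\[ QF_{t}(\varphi^{R}, \psi^{L})(1,1) \;=\; \langle \Psi, \Delta_{\Phi,\Psi}^{1-t}\Psi\rangle. \]

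I would derive this from an operator-valued version valid on all of $\mathcal{M}$. Introduce the candidate family of positive hermitian forms
\[ R_{t}(A,B) \;:=\; \langle A\Psi, \Delta_{\Phi,\Psi}^{1-t} B\Psi\rangle, \qquad A,B \in \mathcal{M}, \]
extended trivially on $(1-s^{\mathcal{M}^{\prime}}(\Psi))\mathcal{H}$, and set $r_{t}(x) := R_{t}(x,x)^{1/2}$. If $\{r_{t}\}_{t \in [0,1]}$ is the quadratical interpolation from $\varphi^{R}(\cdot,\cdot)^{1/2}$ to $\psi^{L}(\cdot,\cdot)^{1/2}$, then uniqueness of $QF_{t}$, stated in the paper immediately after the definition of quadratical interpolation, forces $QF_{t} = R_{t}$, and setting $A = B = 1$ yields the theorem. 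The boundary values are immediate: $R_{1}(A,A) = \|A\Psi\|^{2} = \psi(A^{\ast}A) = \psi^{L}(A,A)$, and $R_{0}(A,A) = \|\overline{S_{\Phi,\Psi}}\, A\Psi\|^{2} = \|s^{\mathcal{M}}(\Psi) A^{\ast}\Phi\|^{2} = \varphi(AA^{\ast}) = \varphi^{R}(A,A)$. Continuity of $t \mapsto r_{t}(x)$ is a dominated-convergence argument on the spectral measure of $\Delta_{\Phi,\Psi}$.

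The substantive step is the quadratical-mean identity $r_{(s_{1}+s_{2})/2} = QM(r_{s_{1}}, r_{s_{2}})$ and its three analogues. The inequality $r_{(s_{1}+s_{2})/2} \le QM(r_{s_{1}}, r_{s_{2}})$ follows from the Cauchy--Schwarz-type bound
\[ \bigl| \langle x\Psi, \Delta_{\Phi,\Psi}^{1-(s_{1}+s_{2})/2} y\Psi\rangle \bigr| \le \langle x\Psi, \Delta_{\Phi,\Psi}^{1-s_{1}} x\Psi\rangle^{1/2} \langle y\Psi, \Delta_{\Phi,\Psi}^{1-s_{2}} y\Psi\rangle^{1/2}, \]
obtained by splitting $\Delta_{\Phi,\Psi}^{1-(s_{1}+s_{2})/2} = \Delta_{\Phi,\Psi}^{(1-s_{1})/2} \Delta_{\Phi,\Psi}^{(1-s_{2})/2}$ and applying ordinary Cauchy--Schwarz in $\mathcal{H}$. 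This places $R_{(s_{1}+s_{2})/2}$ inside $\mathcal{S}(r_{s_{1}}, r_{s_{2}})$.

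The main obstacle is the matching upper bound---showing that $R_{(s_{1}+s_{2})/2}$ in fact attains the supremum defining $QM$. I would handle this via Hadamard's three-lines theorem applied to $z \mapsto \langle x\Psi, \Delta_{\Phi,\Psi}^{1-z} y\Psi\rangle$ on the strip $\{0 \le \Re z \le 1\}$: log-convexity forces any competing $\beta \in \mathcal{S}(r_{s_{1}}, r_{s_{2}})$ to satisfy $\beta(x,x) \le R_{(s_{1}+s_{2})/2}(x,x)$, so that the supremum is realized by $R_{(s_{1}+s_{2})/2}$ itself. This is precisely the Stein--Hirschman interpolation mechanism by which Hiai--Ohya--Tsukada identified the quadratical interpolation with powers of the relative modular operator in the first place; the present statement is that same identification read at the interior point $t = (1+\alpha)/2$ instead of differentiated at $t = 0$. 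Once all four QM identities are verified, uniqueness gives $QF_{t} = R_{t}$ on $\mathcal{M} \times \mathcal{M}$, the prefactors $4/(1-\alpha^{2})$ match trivially, and the theorem follows.
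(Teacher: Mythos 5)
Your reduction to $|\alpha|<1$ and to the single identity $QF_{t}(\varphi^{R},\psi^{L})(1,1)=\langle\Psi,\Delta_{\Phi,\Psi}^{1-t}\Psi\rangle$ with $t=(1+\alpha)/2$ is exactly the paper's route: the proof in Appendix A consists of quoting this identity from the discussion on p.\ 129 of Hiai--Ohya--Tsukada and then matching the two definitions, which is precisely your final two lines. The difference is that you attempt to re-derive the quoted identity rather than cite it, and that derivation has a genuine gap at the step you yourself flag as the main obstacle.

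The three-lines theorem applied to $z\mapsto\langle x\Psi,\Delta_{\Phi,\Psi}^{1-z}y\Psi\rangle$ controls only that one analytic function; it reproduces the Cauchy--Schwarz bound you already obtained by factoring $\Delta_{\Phi,\Psi}^{1-(s_{1}+s_{2})/2}=\Delta_{\Phi,\Psi}^{(1-s_{1})/2}\Delta_{\Phi,\Psi}^{(1-s_{2})/2}$, i.e.\ it re-proves the containment $R_{(s_{1}+s_{2})/2}\in\mathcal{S}(r_{s_{1}},r_{s_{2}})$ and hence $r_{(s_{1}+s_{2})/2}\le QM(r_{s_{1}},r_{s_{2}})$. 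It says nothing about an arbitrary competitor $\beta\in\mathcal{S}(r_{s_{1}},r_{s_{2}})$, which is constrained only by $|\beta(x,y)|\le r_{s_{1}}(x)\,r_{s_{2}}(y)$ and carries no analytic family to which a Phragm\'en--Lindel\"of argument could be applied; so the claimed bound $\beta(x,x)\le R_{(s_{1}+s_{2})/2}(x,x)$ does not follow from log-convexity. What the attainment direction actually requires is the maximality characterization of the geometric mean of positive forms (Pusz--Woronowicz, Ando): the constraint on $\beta$ is equivalent to positivity of the $2\times 2$ block form assembled from $R_{s_{1}}$, $\beta$ and $R_{s_{2}}$, the largest positive hermitian form satisfying it is the geometric mean of the forms of $\Delta_{\Phi,\Psi}^{1-s_{1}}$ and $\Delta_{\Phi,\Psi}^{1-s_{2}}$, and since these operators commute that geometric mean is the form of $\Delta_{\Phi,\Psi}^{1-(s_{1}+s_{2})/2}$. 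With that substitution (or by simply citing \cite[p.~129]{HOT83} as the paper does), the rest of your argument---boundary values, continuity, uniqueness of the quadratical interpolation---goes through, modulo the usual support-projection caveats at $t=0,1$ when $\Phi$ or $\Psi$ fails to be separating.
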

By this theorem, it is not necessary to distinguish
Araki's $\alpha$-divergence and Uhlmann's one.
We call them the quantum $\alpha$-divergence, and denote them by $S^{(\alpha)}(\varphi\Vert\psi)$.
\begin{theo}
Let $\varphi,\psi$ be positive linear functionals on a C$^\ast$-algebra $\mathfrak{A}$
and $\pi$ be a nondegenerate representation of $\mathfrak{A}$ on a Hilbert space.
If there are the normal extentions $\tilde{\varphi},\tilde{\psi}$ of $\varphi,\psi$ to
$\pi(\mathfrak{A})^{\prime\prime}$ such that
$\varphi(A)=\tilde{\varphi}(\pi(A)),\psi(A)=\tilde{\psi}(\pi(A))$, then
$S^{(\alpha)}(\varphi\Vert\psi)=S^{(\alpha)}(\tilde{\varphi}\Vert\tilde{\psi})$.
\end{theo}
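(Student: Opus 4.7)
The extreme cases $\alpha=\pm 1$ reduce immediately to Theorem 3, since $S^{(+1)}(\varphi\Vert\psi)=S(\psi\Vert\varphi)$ and $S^{(-1)}(\varphi\Vert\psi)=S(\varphi\Vert\psi)$ while the normal-extension hypothesis is symmetric in $\varphi$ and $\psi$. For the remaining range $|\alpha|<1$, I would invoke Theorem 7 in order to work with Uhlmann's formulation, which reduces the task to the single identity
$$QF_{t}(\varphi^{R},\psi^{L})(1,1)=QF_{t}(\tilde{\varphi}^{R},\tilde{\psi}^{L})(1,1),\qquad t:=\tfrac{1+\alpha}{2}\in(0,1).$$

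The first step is to verify compatibility of the defining data: since $\pi$ is a $\ast$-homomorphism and $\tilde{\varphi}\circ\pi=\varphi$, $\tilde{\psi}\circ\pi=\psi$, one has $\tilde{\varphi}^{R}(\pi(A),\pi(B))=\varphi^{R}(A,B)$ and $\tilde{\psi}^{L}(\pi(A),\pi(B))=\psi^{L}(A,B)$. Pullback along $\pi$ therefore carries $\mathcal{S}(\tilde{\varphi}^{R}(\cdot,\cdot)^{1/2},\tilde{\psi}^{L}(\cdot,\cdot)^{1/2})$ into $\mathcal{S}(\varphi^{R}(\cdot,\cdot)^{1/2},\psi^{L}(\cdot,\cdot)^{1/2})$ while preserving the value at $(1,1)=(\pi(1),\pi(1))$. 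Applying this inclusion inductively through the defining identities of the quadratical interpolation and invoking the uniqueness clause of the $QF_{t}$ construction yields $QF_{t}(\tilde{\varphi}^{R},\tilde{\psi}^{L})(1,1)\leq QF_{t}(\varphi^{R},\psi^{L})(1,1)$, which is equivalent to $S^{(\alpha)}(\tilde{\varphi}\Vert\tilde{\psi})\geq S^{(\alpha)}(\varphi\Vert\psi)$; in the $\alpha$-divergence language this is a special case of Petz's monotonicity of quasi-entropies under the unital completely positive map $\pi$ \cite{P85,OP93}.

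The nontrivial direction is the reverse inequality, and this is the main obstacle: a priori the larger test space $\pi(\mathfrak{A})''$ could harbour positive hermitian forms raising the supremum at $(1,1)$ beyond what is already achieved over $\mathfrak{A}$. I would rule this out by exploiting normality of $\tilde{\varphi},\tilde{\psi}$ together with $\sigma$-weak density of $\pi(\mathfrak{A})$ in $\pi(\mathfrak{A})''$. Concretely, the seminorms $\tilde{\varphi}^{R}(\cdot,\cdot)^{1/2}$ and $\tilde{\psi}^{L}(\cdot,\cdot)^{1/2}$ are lower semicontinuous for the $\sigma$-weak topology, so any form in $\mathcal{S}(\varphi^{R}(\cdot,\cdot)^{1/2},\psi^{L}(\cdot,\cdot)^{1/2})$ approximating the supremum at $(1,1)$ extends, via Kaplansky's density theorem, to an admissible form on $\pi(\mathfrak{A})''$ without changing its value at $(1,1)$. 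Iterating this extension within the construction of $QF_{t}$ then yields $QF_{t}(\tilde{\varphi}^{R},\tilde{\psi}^{L})(1,1)\geq QF_{t}(\varphi^{R},\psi^{L})(1,1)$, closing the equality. The resulting argument is the direct $\alpha$-analogue of the extension-of-forms step used to deduce Theorem 3 in \cite{HOT83}, with the interpolation hierarchy $QF_{t}$ replacing the relative modular operator.
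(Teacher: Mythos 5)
Your plan is correct and matches the paper's proof in its essential reduction: both handle $\alpha=\pm1$ by falling back on the relative-entropy case, pass to Uhlmann's formulation via Theorem 7, and reduce everything to the single identity $QF_t(\varphi^R,\psi^L)(1,1)=QF_t(\tilde{\varphi}^R,\tilde{\psi}^L)(\pi(1),\pi(1))$, which the paper obtains by directly citing \cite[Lemma 3.1]{HOT83} rather than re-proving it as you sketch. Your two-inequality reconstruction of that lemma is sound in outline (one minor slip: for the reverse inequality the danger is that admissible forms on $\mathfrak{A}$ might fail to extend to $\pi(\mathfrak{A})''$, not that the larger algebra harbours extra forms --- the latter is already excluded by your pullback step), so the only substantive difference is that you supply an argument where the paper supplies a citation.
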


\begin{theo}
Let $\varphi,\psi$ be states on $\mathfrak{A}$ with barycentric measures $\mu,\nu$, respectively.
If there exists a subcentral measure $m$ such that $\mu,\nu\ll m$,
then $S^{(\alpha)}(\varphi\Vert\psi)=D^{(\alpha)}(\mu\Vert\nu)$.
\end{theo}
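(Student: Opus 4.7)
I would mirror the proof of Theorem~\ref{HOT} (Hiai-Ohya-Tsukada), using the Araki form of the quantum $\alpha$-divergence together with the disjointness built into the subcentral decomposition.

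First I would dispose of the endpoints $\alpha=\pm 1$ by direct appeal to Theorem~\ref{HOT}: by definition $S^{(+1)}(\varphi\Vert\psi)=S(\psi\Vert\varphi)$ and $D^{(+1)}(\mu\Vert\nu)=D(\nu\Vert\mu)$, and symmetrically for $\alpha=-1$, so these cases follow immediately.

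For $|\alpha|<1$, the strategy is to use the Araki formula
\[
S^{(\alpha)}(\varphi\Vert\psi)=\frac{4}{1-\alpha^{2}}\Bigl(1-\langle\Psi,\Delta_{\Phi,\Psi}^{(1-\alpha)/2}\Psi\rangle\Bigr),
\]
and to compute the inner product via the disintegration induced by $m$. By the $\alpha$-analogue of the representation-extension theorem (the preceding theorem in Section~6), one may pass to normal extensions $\tilde\varphi,\tilde\psi$ on $\mathcal{M}:=\pi_{\varphi+\psi}(\mathfrak{A})^{\prime\prime}$. The subcentral measure $m$ corresponds, via \cite[Theorem~4.1.25]{BR1}, to an abelian von Neumann subalgebra $\mathcal{B}\subseteq\mathfrak{Z}(\mathcal{M})$ and thus produces a direct-integral decomposition $\mathcal{M}\cong\int_{E_{\mathfrak{A}}}^{\oplus}\mathcal{M}_{\rho}\,dm(\rho)$ with pairwise disjoint fibers. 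The cone vectors split as $\Phi=\int^{\oplus}\Phi_{\rho}\,dm(\rho)$ and $\Psi=\int^{\oplus}\Psi_{\rho}\,dm(\rho)$, with $\|\Phi_{\rho}\|^{2}=f(\rho)$ and $\|\Psi_{\rho}\|^{2}=g(\rho)$, where $f:=d\mu/dm$, $g:=d\nu/dm$.

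The key step is the fiberwise calculation. Disjointness of the fibers forces $\Delta_{\Phi,\Psi}=\int^{\oplus}\Delta_{\Phi_{\rho},\Psi_{\rho}}\,dm(\rho)$; on each fiber $\Phi_{\rho},\Psi_{\rho}$ are standard vectors for the proportional states $f(\rho)\rho$ and $g(\rho)\rho$, so $\Delta_{\Phi_{\rho},\Psi_{\rho}}^{(1-\alpha)/2}\Psi_{\rho}=(f(\rho)/g(\rho))^{(1-\alpha)/2}\Psi_{\rho}$. Taking inner products and integrating yields
\[
\langle\Psi,\Delta_{\Phi,\Psi}^{(1-\alpha)/2}\Psi\rangle=\int f^{(1-\alpha)/2}g^{(1+\alpha)/2}\,dm,
\]
which is exactly the integral appearing in $D^{(\alpha)}(\mu\Vert\nu)$; the common prefactor $4/(1-\alpha^{2})$ converts the two sides into one another.

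\textbf{Main obstacle.} I expect the real work to be in making the direct-integral decomposition of $\Delta_{\Phi,\Psi}$ rigorous: one must invoke a measurable field of standard forms compatible with the subcentral disintegration, verify that the support projections of $\Phi_{\rho}$ and $\Psi_{\rho}$ behave coherently (the hypothesis $\mu,\nu\ll m$ is what guarantees this and keeps $(f/g)^{(1-\alpha)/2}$ well defined $m$-a.e.), and confirm that the bounded Borel functional calculus $t\mapsto t^{(1-\alpha)/2}$ commutes with the direct integral so that $\Delta_{\Phi,\Psi}^{(1-\alpha)/2}$ again decomposes fiberwise. Once this machinery is in place, the scalar computation on each fiber is immediate.
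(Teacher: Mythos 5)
Your argument takes a genuinely different route from the paper's, and it is the harder route. The paper never touches direct integrals or fiberwise modular theory: it proves the identity as a two-sided monotonicity sandwich in Uhlmann's $QF_t$ formulation. For $S^{(\alpha)}(\varphi\Vert\psi)\leq D^{(\alpha)}(\mu\Vert\nu)$ it uses the unital positive map $\Gamma:\mathfrak{A}\to C(E_{\mathfrak{A}})$, $\Gamma(A)(\omega)=\omega(A)$, whose dual carries $\mu,\nu$ to $\varphi,\psi$, together with monotonicity of the $\alpha$-divergence under such maps; for the reverse inequality it restricts the normal extensions $\tilde\varphi,\tilde\psi$ to the abelian subalgebra $\mathcal{C}=\kappa_m(L^\infty(E_{\mathfrak{A}},m))$ of the center, on which they reduce exactly to $\mu$ and $\nu$, and invokes the restriction-monotonicity Lemma~\ref{mono} (derived from Uhlmann's interpolation results). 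Thus the subcentral hypothesis enters only through the existence of this embedded copy of $L^\infty(m)$ in the center. Your fiberwise computation would instead give the identity in one stroke and explains \emph{why} it holds (both states are simultaneously diagonal over $\mathcal{C}$, so $\Delta_{\Phi,\Psi}$ acts on $\Psi$ fiberwise by the scalar $f/g$), but it needs exactly the machinery you flag --- measurable fields of standard forms and the splitting $\Delta_{\Phi,\Psi}=\int^{\oplus}\Delta_{\Phi_\rho,\Psi_\rho}\,dm(\rho)$ --- none of which the paper's route requires. Two points of care beyond what you list: first, the assertion that $\Phi_\rho$ and $\Psi_\rho$ represent the \emph{proportional} states $f(\rho)\rho$ and $g(\rho)\rho$ is where the hypotheses do the real work; you must verify that the operator disintegration of $\tilde\varphi$ over $\mathcal{C}$ coincides with the barycentric decomposition $\varphi=\int\rho\,f(\rho)\,dm(\rho)$ (uniqueness of subcentral decompositions), since otherwise the two fibers at a given $\rho$ need not be proportional and $\Psi_\rho$ would not be an eigenvector of $\Delta_{\Phi_\rho,\Psi_\rho}$. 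Second, it is not disjointness of the fibers per se but decomposability over a central abelian subalgebra that makes the algebra, its commutant, and hence the relative modular operator split. Your reduction of $\alpha=\pm1$ to Theorem~\ref{HOT} matches the paper's implicit "it suffices to prove for $|\alpha|<1$".
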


\subsection{A Quantum Version of Chernoff Bound}
We use the same notation in subsection 3.2.
\begin{eqnarray*}
\mathrm{M}(\omega_1,\omega_2) \!\!\!&=&\!\!\!\bigcup_{A=A^\ast\in\pi_{\omega}(\mathfrak{A})^{\prime\prime}}
\mathrm{M}(A; \omega_1,\omega_2), \\
\mathrm{M}(A; \omega_1,\omega_2) \!\!\!&=&\!\!\!
\{(A,\alpha) \in \pi_{\omega}(\mathfrak{A})^{\prime\prime}\times Aut(\pi_\omega(\mathfrak{A})^{\prime\prime}\otimes
 \mathcal{A})|\;\mathcal{A}=C_0(A)^{\prime\prime},\\
&&\mu_{\omega_{1,(A,\alpha)}}, \mu_{\omega_{2,(A,\alpha)}}\ll \exists m:\;\text{a}\;\text{subcentral}\;
\text{measure}\},
\end{eqnarray*}
For a $(A,\alpha)\in \mathrm{M}(\omega_1,\omega_2)$, we define
$\mathcal{R}^{(A,\alpha)}_n(T_n)=\pi_1\alpha^{(A,\alpha)}_n(T_n)+\pi_2\beta_n^{(A,\alpha)}(T_n)$,
where $\pi_1,\pi_2\in\mathbb{R}$ such that $\pi_1,\pi_2>0,\pi_1+\pi_2=1$.

\begin{theo}
\begin{equation*}
\lim_{n\rightarrow \infty}\inf_{T_n}
\frac{1}{n}\log \mathcal{R}^{(A,\alpha)}_n(T_n)=\inf_{0\leq t\leq 1}\log F_{t}(\varphi,\psi),
\end{equation*}
where $F_{t}(\varphi,\psi)=QF_{t}(\varphi^R,\psi^L)(1,1)$.
\end{theo}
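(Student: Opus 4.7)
The plan is to reduce the quantum Chernoff bound to the classical one by applying the $\alpha$-analogue of the Hiai-Ohya-Tsukada theorem (Theorem 9), in perfect analogy with how Theorem 5 (Another Version of Quantum Stein's Lemma) was deduced from the classical Chen Stein--type result combined with Theorem \ref{HOT}.

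First, I would observe that for $(A,\alpha)\in \mathrm{M}(\omega_1,\omega_2)$ we have by definition two probability measures $\mu_{\omega_{1,(A,\alpha)}}$ and $\mu_{\omega_{2,(A,\alpha)}}$ that are mutually absolutely continuous with respect to a common subcentral measure $m$ on $E_{\mathfrak{A}\otimes_{\mathrm{m}}C_0(A)}$. The error probabilities $\alpha_n^{(A,\alpha)}(T_n)$ and $\beta_n^{(A,\alpha)}(T_n)$ are then genuinely classical error probabilities for testing between the product measures $P_{\omega_1}^{(A,\alpha)}$ and $P_{\omega_2}^{(A,\alpha)}$ based on the i.i.d.\ random variables $Y_1,Y_2,\ldots$ defined in Section 4.2. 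In particular, $\mathcal{R}_n^{(A,\alpha)}(T_n)=\pi_1\alpha_n^{(A,\alpha)}(T_n)+\pi_2\beta_n^{(A,\alpha)}(T_n)$ is the classical Bayes risk.

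Second, I would invoke the classical Chernoff bound (see, e.g., \cite{DZ02}), which asserts that with densities $p_i=d\mu_{\omega_{i,(A,\alpha)}}/dm$ one has
\begin{equation*}
\lim_{n\to\infty}\inf_{T_n}\frac{1}{n}\log\mathcal{R}_n^{(A,\alpha)}(T_n)=\inf_{0\leq t\leq 1}\log\int p_1^{1-t}\,p_2^{t}\,dm.
\end{equation*}
This step uses only the hypothesis $(A,\alpha)\in\mathrm{M}(\omega_1,\omega_2)$ (absolute continuity with respect to $m$) and standard large-deviation machinery for sums of log-likelihood ratios, so it is essentially immediate.

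Third, I would translate the right-hand side back into the quantum $\alpha$-divergence using Theorem 9. Comparing the formulas defining $D^{(\alpha)}(\mu\Vert\nu)$ and $S^{(\alpha)}(\varphi\Vert\psi)_{\mathrm{Uhlmann}}$ for $|\alpha|<1$, equality of the two sides of Theorem 9 is equivalent to
\begin{equation*}
QF_{(1+\alpha)/2}(\varphi^R,\psi^L)(1,1)=\int\Bigl(\frac{d\mu}{dm}\Bigr)^{(1-\alpha)/2}\!\Bigl(\frac{d\nu}{dm}\Bigr)^{(1+\alpha)/2}dm.
\end{equation*}
Writing $t=(1+\alpha)/2\in[0,1]$ and taking $\mu,\nu$ to be the measurement-induced central measures, this gives exactly $F_t(\omega_1,\omega_2)=\int p_1^{1-t}p_2^{t}\,dm$ for $t\in(0,1)$; the endpoints $t=0,1$ follow by continuity (or by the $\alpha=\pm 1$ cases of Theorem 9, which reduce to Theorem \ref{HOT}). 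Substituting into the classical Chernoff formula yields the claimed identity.

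The only genuine subtlety is the last step: one must be sure that the $F_t$ appearing in the statement really coincides, under the hypotheses of the theorem, with the classical integral $\int p_1^{1-t}p_2^{t}\,dm$ uniformly in $t\in[0,1]$, and that the infimum over $t$ commutes with the limit $n\to\infty$. The uniform identification is handled by the $\alpha$-HOT theorem (Theorem 9) applied with the subcentral measure $m$ guaranteed by $(A,\alpha)\in\mathrm{M}(\omega_1,\omega_2)$, while the interchange of $\inf_t$ with $\lim_n$ is built into the classical Chernoff result itself. Apart from this, the argument is formally parallel to the proof of Theorem 5.
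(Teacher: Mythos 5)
Your proposal matches the paper's proof in both structure and substance: the paper likewise invokes the classical Chernoff bound (citing \cite[Theorem 2.1]{NS09}) to obtain $\inf_{0\leq t\leq 1}\log\int dm\,(p_{\varphi}^{(A,\alpha)})^{1-t}(p_{\psi}^{(A,\alpha)})^{t}$ and then identifies this integral with $QF_{t}(\varphi^{R},\psi^{L})(1,1)=F_{t}(\varphi,\psi)$ via the commutative evaluation of $QF_t$ underlying Theorem 9 together with the invariance of $QF_t$ under the measuring interaction. Your explicit change of variables $t=(1+\alpha)/2$ and appeal to the $\alpha$-Hiai-Ohya-Tsukada theorem is exactly the mechanism the paper uses, so no further comparison is needed.
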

\begin{proof}
By \cite[Theorem 2.1]{NS09},
\begin{equation*}
\lim_{n\rightarrow \infty}\inf_{T_n}
\frac{1}{n}\log \mathcal{R}^{(A,\alpha)}_n(T_n)=
\inf_{0\leq t\leq 1}\log
\int dm(p_{\varphi}^{(A,\alpha)} )^{1-t}
(p_{\psi}^{(A,\alpha)} )^{t},
\end{equation*}
where $p_{\varphi}^{(A,\alpha)}=\dfrac{d\mu_{\varphi,(A,\alpha)}}{dm}$,
$p_{\psi}^{(A,\alpha)}=\dfrac{d\mu_{\psi,(A,\alpha)}}{dm}$.
\begin{eqnarray*}
\int dm\;(p_{\varphi}^{(A,\alpha)} )^{1-t}(p_{\psi}^{(A,\alpha)} )^{t}
&=&QF_{t}(\varphi_{(A,\alpha)}^R,\psi_{(A,\alpha)}^L)(1\otimes_{\mathrm{m}} 1,1\otimes_{\mathrm{m}} 1) \\
 &=&QF_{t}(\varphi^R,\psi^L)(1,1)=F_{t}(\varphi,\psi).
\end{eqnarray*}

\end{proof}

\subsection{Bayesian $\alpha$-Predictive State}
The concept of a Bayesian $\alpha$-predictive state or a generalied Bayes predictive state
studied in earlier papers \cite{Ai75,CG99,T06,TK05}
is a generalization of a Bayes (escort) predictive state which appeared in the previous section.
However, we can easily see that it is difficult to apply this concept in a C$^\ast$-algebraic setting.
Thus we define a class of statistical model.
\begin{defi}
A family of states $\{\omega_{\theta}|\theta\in\Theta\}$ parametrized by a
compact set $\Theta$ in $\mathbb{R}^{d}$ is called a classical model if it
satisfies the following three conditions: \newline$\;\;(i)$ There is a
subcentral measure $m$ on $E_{\mathfrak{A}}$ such that $\mu_{\omega_{\theta}%
}\ll m$ for every $\theta\in\Theta$.\newline$\;(ii)$ The set
$\displaystyle{\overline{\left\{  \rho\in E_{\mathfrak{A}}\Big|p_{\theta}(\rho):=\frac
{d\mu_{\omega_{\theta}}}{dm}(\rho)>0\right\}  }}$ is independent of $\theta
\in\Theta$.\newline$(iii)$ $\omega_{\theta}$ is Bochner integrable.
\end{defi}
Let us define a Bayesian $\alpha$-predictive state.
\begin{defi}
Let $\{\omega_{\theta}|\theta\in\Theta\}$ be a classical model,
$\pi(\theta)$ be a probability density on $\Theta$, $\rho^n=\{\rho_1,\cdots,\rho_n\}$,
$-1\leq \alpha\leq 1$.
A state
\begin{equation}
\omega_{\pi,\alpha}^{\rho^n}=\frac{1}{C_{\pi,\alpha}^{\rho^n}} \int\rho\;p_{\pi,\alpha}(\rho|\rho^n)\;dm(\rho),
\end{equation}
is called a Bayesian $\alpha$-predictive state,
where 
\begin{equation*}
p_{\pi,\alpha}(\rho|\rho^n)=
\left\{
\begin{array}{ll}
\left(\int p_\theta(\rho)^{\frac{1-\alpha}{2}}
\pi(\theta|\rho^n)d\theta\right)^{\frac{2}{1-\alpha}} , &\; (\alpha\neq 1) \\
\exp\left(\int \log p_\theta(\rho)\pi(\theta|\rho^n)d\theta  \right) , &\;(\alpha=1)
\end{array}
\right.
\end{equation*}
a posteriori probability density
\begin{equation}
\pi(\theta|\rho^n)=\frac{\pi(\theta)\prod_{j=1}^n p_\theta(\rho_j)}{\int\pi(\theta)\prod_{j=1}^n p_\theta(\rho_j)
d\theta},
\end{equation}
and
\begin{equation*}
C_{\pi,\alpha}^{\rho^n}=\int p_{\pi,\alpha}(\rho|\rho^n)\;dm(\rho).
\end{equation*}
\end{defi}
We prove the following theorem to justify the use of a Bayesian $\alpha$-predictive state
in the context of quantum statistical decision theory, which is first given by \cite{Ai75}
and generalized by \cite{CG99,TK05,T06}.
\begin{theo}
For a state-valued function $\rho^n\mapsto \varphi^{\rho^n}$ such that it has a barycentric measure
which is absolutely continuous with respect to $m$, its risk function
\begin{equation}
R^{(\alpha)}(\omega_\theta\Vert\varphi^{\cdot})=\int\!\!\!\int S^{(\alpha)}(\omega_\theta\Vert\varphi^{\rho^n})
\prod_{i=1}^n d\mu_{\theta}(\rho_i) \pi(\theta)d\theta
\end{equation}
is minimized at $\omega_{\pi,\alpha}^{\rho^n}$.
\end{theo}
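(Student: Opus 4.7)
The plan is to use Theorem 9 (the $\alpha$-version of Hiai--Ohya--Tsukada) to reduce the entire minimization to a classical one on the spaces of densities with respect to $m$, and then solve that minimization pointwise in $\rho^{n}$ by a variational argument (Hölder for $|\alpha|<1$, Gibbs for $\alpha=\pm 1$).

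First, let $p_\theta(\rho)=\frac{d\mu_{\omega_\theta}}{dm}(\rho)$ and, for each observation vector $\rho^{n}$, let $q(\rho\mid\rho^{n})=\frac{d\mu_{\varphi^{\rho^{n}}}}{dm}(\rho)$, which is a probability density on $E_{\mathfrak{A}}$ by the hypothesis on $\varphi^{\cdot}$. Both $\mu_{\omega_\theta}$ and $\mu_{\varphi^{\rho^{n}}}$ are absolutely continuous with respect to the same subcentral measure $m$ (since the model is classical in the sense of Definition 6), so by Theorem 9
\[
S^{(\alpha)}(\omega_\theta\Vert\varphi^{\rho^{n}})=D^{(\alpha)}\!\bigl(p_\theta\,\Vert\, q(\cdot\mid\rho^{n})\bigr).
\]
Thus $R^{(\alpha)}$ is the classical Bayes risk on densities; minimizing over state-valued functions $\rho^{n}\mapsto\varphi^{\rho^{n}}$ with $m$-absolutely continuous barycenters amounts to minimizing over kernels $q(\cdot\mid\rho^{n})$ of probability densities with respect to $m$.

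Next I would apply Fubini to exchange the integrations over $\theta$ and $\rho^{n}$. Writing the marginal $m_{n}(\rho^{n})=\int\pi(\theta)\prod_{i}p_\theta(\rho_{i})\,d\theta$ and the posterior $\pi(\theta\mid\rho^{n})=\pi(\theta)\prod_{i}p_\theta(\rho_{i})/m_{n}(\rho^{n})$, the risk becomes an integral over $\rho^{n}$ of a functional of $q(\cdot\mid\rho^{n})$ alone. It therefore suffices to minimize, for each fixed $\rho^{n}$ and under the constraint $\int q(\rho\mid\rho^{n})\,dm(\rho)=1$, the following:
\begin{itemize}
\item For $|\alpha|<1$: maximize $\int q(\rho\mid\rho^{n})^{(1+\alpha)/2}\,f(\rho\mid\rho^{n})\,dm(\rho)$, where $f(\rho\mid\rho^{n})=\int \pi(\theta\mid\rho^{n})\,p_\theta(\rho)^{(1-\alpha)/2}\,d\theta$.
\item For $\alpha=+1$: minimize $\int q(\rho\mid\rho^{n})\bigl[\log q(\rho\mid\rho^{n})-\int\pi(\theta\mid\rho^{n})\log p_\theta(\rho)\,d\theta\bigr]dm(\rho)$.
\item For $\alpha=-1$: minimize $-\int \tilde p(\rho\mid\rho^{n})\log q(\rho\mid\rho^{n})\,dm(\rho)$ with $\tilde p(\rho\mid\rho^{n})=\int\pi(\theta\mid\rho^{n})p_\theta(\rho)\,d\theta$.
\end{itemize}

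Each of these three variational problems has a standard closed-form solution. For $|\alpha|<1$, Hölder's inequality with exponents $2/(1+\alpha)$ and $2/(1-\alpha)$, together with the Lagrange condition, gives the unique optimum $q\propto f^{2/(1-\alpha)}$; plugging in the definition of $f$ recovers exactly $p_{\pi,\alpha}(\rho\mid\rho^{n})$ up to the normalizer $C_{\pi,\alpha}^{\rho^{n}}$. For $\alpha=+1$, the Gibbs inequality gives $q\propto \exp\bigl(\int\pi(\theta\mid\rho^{n})\log p_\theta\,d\theta\bigr)$, which is $p_{\pi,1}(\rho\mid\rho^{n})/C_{\pi,1}^{\rho^{n}}$; for $\alpha=-1$, Gibbs again forces $q=\tilde p=p_{\pi,-1}$ (with $C_{\pi,-1}^{\rho^{n}}=1$). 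In every case this density is precisely $\frac{d\mu_{\omega_{\pi,\alpha}^{\rho^{n}}}}{dm}$, so the minimizing predictive state is $\omega_{\pi,\alpha}^{\rho^{n}}$. The main obstacle is bookkeeping the three regimes uniformly and justifying the Fubini interchange plus the pointwise-in-$\rho^{n}$ optimization under the integrability/positivity conditions guaranteed by Definition 6; the inequality step itself is routine.
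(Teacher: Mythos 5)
Your proposal is correct, and it reaches the result by a different device than the paper. Both arguments share the same skeleton: reduce $S^{(\alpha)}(\omega_\theta\Vert\varphi^{\rho^n})$ to the classical $\alpha$-divergence of densities with respect to $m$ via the $\alpha$-version of the Hiai--Ohya--Tsukada theorem, then use Fubini/Tonelli to pull the prior--likelihood weight $\pi(\theta)\prod_i p_\theta(\rho_i)$ inside and convert it into the posterior. Where you diverge is in how optimality is certified. You solve the constrained variational problem pointwise in $\rho^n$ from scratch --- H\"older with conjugate exponents $2/(1+\alpha)$ and $2/(1-\alpha)$ for $|\alpha|<1$, the Gibbs inequality for $\alpha=\pm1$ --- and thereby \emph{derive} the form of $p_{\pi,\alpha}(\cdot\mid\rho^n)$. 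The paper instead guesses the answer and verifies it by an Aitchison-type identity: it computes the difference $R^{(\alpha)}(\varphi_1^{\cdot})-R^{(\alpha)}(\omega_{\pi,\alpha}^{\cdot})$ (the constant terms cancel), and after the same Fubini step recognizes the remainder as a nonnegatively weighted integral of $S^{(\alpha)}(\omega_{\pi,\alpha}^{\rho^n}\Vert\varphi_1^{\rho^n})$ over $\rho^n$, whence nonnegativity of the divergence finishes the proof. The two certificates are the same inequality in disguise (nonnegativity of $D^{(\alpha)}$ for $|\alpha|<1$ \emph{is} H\"older applied to two probability densities), but the paper's identity additionally quantifies the excess risk of any competitor as an explicit average of divergences from the optimum, while your route has the advantage of explaining where the formula for $p_{\pi,\alpha}$ comes from and of treating the boundary cases $\alpha=\pm1$ explicitly, which the paper dismisses with ``similar argument.'' Two small points worth making explicit if you write this up: the minimization is over state-valued functions, not over arbitrary density kernels, so you should note that the unconstrained optimal density $p_{\pi,\alpha}(\cdot\mid\rho^n)/C_{\pi,\alpha}^{\rho^n}$ is in fact realized by the state $\omega_{\pi,\alpha}^{\rho^n}$ (this is where Bochner integrability in Definition 6 enters); and for H\"older to give a finite, attainable bound you need $\int f^{2/(1-\alpha)}\,dm=C_{\pi,\alpha}^{\rho^n}<\infty$, which follows from Jensen's inequality applied to the posterior average inside $f$.
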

\begin{proof}
We prove this theorem only when $\alpha\neq\pm 1$, since $\alpha=\pm 1$ can be proved by similar argument.
\begin{equation*}
S^{(\alpha)}(\omega_\theta\Vert\varphi^{\rho^n})=
\frac{4}{1-\alpha^2}\int \left( 1-
p_\theta(\rho)^{\frac{1-\alpha}{2}}q^{\rho^n}(\rho)^{\frac{1+\alpha}{2}}\right) dm(\rho),
\end{equation*}
where $q^{\rho^n}=\dfrac{d\mu_{\varphi^{\rho^n}}}{dm}$.
\begin{eqnarray}
&& R^{(\alpha)}(\varphi^{\cdot}_1\Vert\omega_\theta)-R^{(\alpha)}(\varphi^{\cdot}_2\Vert\omega_\theta) \nonumber\\
 &=& \int\!\!\!\int \left\{\frac{4}{1-\alpha^2}\int \left( p_\theta(\rho)^{\frac{1-\alpha}{2}}
 q^{\rho^n}_2(\rho)^{\frac{1+\alpha}{2}}-p_\theta(\rho)^{\frac{1-\alpha}{2}}
 q^{\rho^n}_1(\rho)^{\frac{1+\alpha}{2}}\right) dm(\rho)
 \right\}\prod_{i=1}^n d\mu_{\theta}(\rho_i) \pi(\theta)d\theta\nonumber \\
 &=&\frac{4}{1-\alpha^2}\int\!\!\!\int\!\!\!\int p_\theta(\rho)^{\frac{1-\alpha}{2}}
 (q^{\rho^n}_2(\rho)^{\frac{1+\alpha}{2}}-q^{\rho^n}_1(\rho)^{\frac{1+\alpha}{2}})
 dm(\rho)\prod_{i=1}^n p_\theta(\rho_i)dm(\rho_i) \pi(\theta)d\theta\nonumber \\
 &=&\frac{4}{1-\alpha^2}\int\!\!\!\int
 \left(\int p_\theta(\rho)^{\frac{1-\alpha}{2}} \prod_{i=1}^n p_\theta(\rho_i)\pi(\theta)d\theta\right)
 (q^{\rho^n}_2(\rho)^{\frac{1+\alpha}{2}}-q^{\rho^n}_1(\rho)^{\frac{1+\alpha}{2}})dm(\rho)
 \prod_{i=1}^n dm(\rho_i)\nonumber \\
 &=&\frac{4}{1-\alpha^2}\int (C_{\pi,\alpha}^{\rho^n})^{\frac{1-\alpha}{2}} \left\{\int
 p_{\pi,\alpha}(\rho|\rho^n)^{\frac{1-\alpha}{2}}
(q^{\rho^n}_2(\rho)^{\frac{1+\alpha}{2}}-q^{\rho^n}_1(\rho)^{\frac{1+\alpha}{2}})dm(\rho)\right\}
 p_\pi(\rho^n) \prod_{i=1}^n dm(\rho_i) \nonumber
\end{eqnarray}
where $\displaystyle{p_\pi(\rho^n)=\int \prod_{i=1}^n p_\theta(\rho_i)\pi(\theta)d\theta}$.
When $\varphi^{\rho^n}_2$ is equal to $\omega_{\pi,\alpha}^{\rho^n}$,
\begin{eqnarray}
 && R^{(\alpha)}(\varphi^{\cdot}_1\Vert\omega_\theta)
 -R^{(\alpha)}(\omega_{\pi,\alpha}^{\cdot}\Vert\omega_\theta) \nonumber\\
 &=&\!\!\!\frac{4}{1-\alpha^2}\int (C_{\pi,\alpha}^{\rho^n})^{\frac{1-\alpha}{2}}
 \left\{\int p_{\pi,\alpha}(\rho|\rho^n)^{\frac{1-\alpha}{2}}
(p_{\pi,\alpha}(\rho|\rho^n)^{\frac{1+\alpha}{2}}-q^{\rho^n}_1(\rho)^{\frac{1+\alpha}{2}})dm(\rho)\right\}
 p_\pi(\rho^n) \prod_{i=1}^n dm(\rho_i) \nonumber \\
 &=&\int (C_{\pi,\alpha}^{\rho^n})^{\frac{1-\alpha}{2}} \left\{\frac{4}{1-\alpha^2}\int
 (1-p_{\pi,\alpha}(\rho|\rho^n)^{\frac{1-\alpha}{2}}q^{\rho^n}_1(\rho)^{\frac{1+\alpha}{2}})dm(\rho)\right\}
 p_\pi(\rho^n) \prod_{i=1}^n dm(\rho_i) \nonumber \\
 &=&\int (C_{\pi,\alpha}^{\rho^n})^{\frac{1-\alpha}{2}} D^{(\alpha)}
 (p_{\pi,\alpha}(\rho|\rho^n)\Vert q^{\rho^n}_1(\rho))
 p_\pi(\rho^n) \prod_{i=1}^n dm(\rho_i) \nonumber \\
 &=&\int (C_{\pi,\alpha}^{\rho^n})^{\frac{1-\alpha}{2}} S^{(\alpha)}
 (\omega_{\pi,\alpha}^{\rho^n}\Vert \varphi^{\rho^n}_1)
 p_\pi(\rho^n) \prod_{i=1}^n dm(\rho_i)\geq 0. \nonumber
\end{eqnarray}
The last equality holds when $\varphi^{\rho^n}_1$ equals to $\omega_{\pi,\alpha}^{\rho^n}$,
since $\varphi^{\rho^n}_1$ is arbitrary. This means
\begin{equation*}
R^{(\alpha)}(\varphi^{\cdot}\Vert\omega_\theta)
\geq R^{(\alpha)}(\omega_{\pi,\alpha}^{\cdot}\Vert\omega_\theta)
\end{equation*}
for any state-valued function $\varphi^{\rho^n}$.
\end{proof}

\section{Conclusion}
We have proved the validity of the use of Hiai-Ohya-Tsukada theorem on the basis of sector theory
and measuring processes. As a result, a quantum version of Stein's lemma and of Sanov's theorem
are proved, and it turns out that the quantum relative entropy is a rate fuction in large deviarion theory.
In addition, we have formulated the measurement which allows us to apply information criteria, and defined
information criteria for quantum states. It is shown that their accuracy is the same as that in classical case.
However, there is plenty of room for deepening measurements which we can evalulate for each model equivalently.

Let us compare the results in this paper with past studies.
The methods in this paper are extensions of classical ones,
which suit modern statistics and stand different views from \cite{He76,Ho82} and succeeding investigations.
In particular, model selection using information criteria compares with hypothesis testing as methods for
constructing models and testing hypotheses, and is expected to apply to quantum systems effectively.
In asymptotic theory of quantum hypothesis testing, we could consider the universal situation
increasing the number of measured data in this paper, while the special but important situation that
the number of quanta in the system increases were examined in past studies
\cite{Au07,Ha07,HN07,HP91,N06,NS09,OH04,ON00}.
It is common for both of them that optimization of measurement is vital. We hope that both universal
and special methods in quantum statistical inference will be developed in the future.

\appendix
\section{Proofs of Theorems}
\begin{proof}[Proof of Theorem 7]
It suffices to prove for $|\alpha|<1$.
It is proved by discussion in \cite[p.129]{HOT83} that
\begin{equation*}
QF_t(\varphi^R,\psi^L)(1,1)=\langle \Psi,\Delta_{\Phi,\Psi}^{1-t} \Psi\rangle.
\end{equation*}
Therefore, we have
\begin{eqnarray*}
S^{(\alpha)}(\varphi\Vert\psi)_{\mathrm{Uhlmann}}
&=&\frac{4}{1-\alpha^2}(1-QF_{\frac{1+\alpha}{2}}(\varphi^R,\psi^L)(1,1)) \\
 &=&\frac{4}{1-\alpha^2}(1-\langle \Psi,\Delta_{\Phi,\Psi}^{1-\frac{1+\alpha}{2}} \Psi\rangle) \\
 &=&S^{(\alpha)}(\varphi\Vert\psi)_{\mathrm{Araki}}.
\end{eqnarray*}
\end{proof}

\begin{proof}[Proof of Theorem 8]
It suffices to prove for $|\alpha|<1$.
By \cite[Lemma 3.1]{HOT83}, it is proved that
\begin{equation*}
QF_t(\varphi^R,\psi^L)(A,A)
=QF_t(\tilde{\varphi}^R,\tilde{\psi}^L)(\pi(A),\pi(A)),
\end{equation*}
where $\pi=\pi_{\varphi+\psi}$. Thus we have
\begin{eqnarray*}
S^{(\alpha)}(\varphi\Vert\psi) &=&\frac{4}{1-\alpha^2}(1-QF_{\frac{1+\alpha}{2}}(\varphi^R,\psi^L)(1,1)) \\
 &=&\frac{4}{1-\alpha^2}(1-QF_{\frac{1+\alpha}{2}}(\tilde{\varphi}^R,\tilde{\psi}^L)(\pi(1),\pi(1))) \\
 &=& S^{(\alpha)}(\tilde{\varphi}\Vert\tilde{\psi}).
\end{eqnarray*}
\end{proof}

\begin{lemm2}\label{mono}
Let $\varphi|_{\mathcal{B}}, \psi|_{\mathcal{B}}$ be the restrictions of states $\varphi,\psi$
on a von Neumann algebra $\mathcal{M}$ to a $\ast$-subalgebra $\mathcal{B}$ to $\mathcal{M}$. Then,
\begin{equation*}
S^{(\alpha)}_{\mathcal{B}}(\varphi\Vert\psi):=S^{(\alpha)}(\varphi|_{\mathcal{B}}\Vert\psi|_{\mathcal{B}})
\leq S^{(\alpha)}(\varphi\Vert\psi).
\end{equation*}
\end{lemm2}
\begin{proof}
It suffices to prove it for $|\alpha|<1$.
$|_{\mathcal{B}}:E_\mathcal{M}\rightarrow E_\mathcal{B}$ is the dual of the embedding
$\mathcal{B}\ni B \mapsto B\in\mathcal{M}$.
Using \cite[Proposition 9]{U77} and Theorem 8,
\begin{eqnarray*}
QF_t(\varphi^R\Vert\psi^L)(1,1)&=&QI_t(\varphi^R\Vert\psi^L)(1)^2 \\
 &\leq&QI_t((\varphi|_{\mathcal{B}})^R\Vert(\psi|_{\mathcal{B}})^L)(1)^2 \\
 &=&QF_t((\varphi|_{\mathcal{B}})^R\Vert(\psi|_{\mathcal{B}})^L)(1,1).
\end{eqnarray*}
Thus we have
\begin{eqnarray*}
S^{(\alpha)}(\varphi\Vert\psi)
&=&\frac{4}{1-\alpha^2}(1-QF_{\frac{1+\alpha}{2}}(\varphi^R,\psi^L)(1,1)) \\
 &\geq&
 \frac{4}{1-\alpha^2}(1-QF_{\frac{1+\alpha}{2}}((\varphi|_{\mathcal{B}})^R\Vert(\psi|_{\mathcal{B}})^L)(1,1)) \\
 &=& S^{(\alpha)}(\varphi|_{\mathcal{B}}\Vert\psi|_{\mathcal{B}})=S^{(\alpha)}_{\mathcal{B}}(\varphi\Vert\psi).
\end{eqnarray*}

\end{proof}
\begin{proof}[Proof of Theorem 9]
It suffices to prove for $|\alpha|<1$.

We define $\Gamma: \mathfrak{A}\rightarrow C(E_\mathfrak{A})$ by $\Gamma(A)(\omega)=\omega(A)$, and its dual map
$\Gamma^\ast:M_1(E_\mathfrak{A})\rightarrow E_\mathfrak{A}$ by
\begin{equation*}
(\Gamma^\ast\lambda)(A)=\int \omega(A)\;d\lambda(\omega)=\lambda(\Gamma(A)),
\end{equation*}
for $\lambda\in M_1(E_\mathfrak{A})$. $\Gamma$ satisfies
\begin{equation*}
\Gamma(1)=1,\;\Gamma(A^\ast) =\Gamma(A)^\ast,\;
\Gamma(A^\ast)\Gamma(A) \leq \Gamma(A^\ast A),
\end{equation*}
for any $A\in\mathfrak{A}$. Thus,
\begin{equation*}
S^{(\alpha)}(\varphi\Vert\psi) =S^{(\alpha)}(\Gamma^\ast\mu\Vert\Gamma^\ast\nu) 
\leq D^{(\alpha)}(\mu\Vert\nu).
\end{equation*}
By assumption, $m$ is a subcentral measure $\mu_{\mathcal{C}}$ associated with a von Neumann
subalgebra $\mathcal{C}$ of the center
$\mathfrak{Z}_\chi(\mathfrak{A})=\pi_{\chi}(\mathfrak{A})^{\prime\prime}\cap\pi_{\chi}(\mathfrak{A})^{\prime}$
of a state $\chi\in E_{\mathfrak{A}}$.
We define a $\ast$-isomorphism $\kappa_m:L^\infty(E_\mathfrak{A},m)\rightarrow \mathcal{C}$ by
\begin{equation*}
\langle \Omega_\chi,\pi_\chi(A)\kappa_m(f) \Omega_\chi\rangle=\int f(\rho) \rho(A)\;dm(\rho).
\end{equation*}
It is then proved in \cite[Example 2.6 and Theorem 3.2]{HOT83} that
\begin{eqnarray*}
\tilde{\varphi}(\kappa_m(f))&=&\int f(\rho)\;d\mu(\rho),\\
\tilde{\psi}(\kappa_m(f))&=&\int f(\rho)\;d\nu(\rho),
\end{eqnarray*}
where $\tilde{\varphi}, \tilde{\psi}$ are the normal extensions of $\varphi, \psi$ to
$\pi_{\chi}(\mathfrak{A})^{\prime\prime}$. By Lemma \ref{mono},
\begin{eqnarray*}
S^{(\alpha)}(\varphi\Vert\psi) &=&S^{(\alpha)}(\tilde{\varphi}\Vert\tilde{\psi}) \\
 &\geq&S^{(\alpha)}_{\mathcal{C}}(\tilde{\varphi}\Vert\tilde{\psi})=D^{(\alpha)}(\mu\Vert\nu).
\end{eqnarray*}
\end{proof}

\section*{Acknowledgment}
The author would like to thank Prof. Izumi Ojima, Dr. Hayato Saigo, Dr. Takahiro Hasebe,
Dr. Hiroshi Ando for their warm encouragement and useful comments.

\end{document}